\definecolor{mygray}{gray}{.9}
\newtheorem{lem}{Lemma}
\newtheorem{defn}{Definition}
\newcommand{\myvec}[1]%
   {\stackrel{\raisebox{-2pt}[0pt][0pt]{\small$\rightharpoonup$}}{#1}}
\begin{document}

\title{Location-aware ICI Reduction in MIMO-OFDM Downlinks for High-speed Railway Communication Systems}

\author{\IEEEauthorblockN{Jiaxun Lu, Xuhong Chen, Shanyun Liu, Pingyi Fan}

\IEEEauthorblockA{
Tsinghua National Laboratory for Information Science and Technology(TNList),\\
Department of Electronic Engineering, Tsinghua University, Beijing, China.\\
Email: \{lujx14, chenxh13, sy-liu14\}@mails.tsinghua.edu.cn, fpy@mail.tsinghua.edu.cn}
}

\maketitle

\graphicspath{{Figures/}}

\begin{abstract}
High mobility may destroy the orthogonality of subcarriers in OFDM systems, resulting in inter-carrier interference (ICI), which may greatly reduce the service quantity of high-speed railway (HSR) wireless communications. This paper focuses on ICI mitigation in the HSR downlinks with distributed transmit antennas. For such a system, its key feature is that the ICIs are caused by  multiple carrier frequency offsets corresponding to multiple transmit antennas. Meanwhile, the channel of HSR is fast time varying, which is another big challenge in the system design. In order to get a good performance, low complexity real-time ICI reduction is necessary. To this end, we first analyzed the property of the ICI matrix in AWGN and Rician scenarios, respectively. Then, we propose corresponding low complexity ICI reduction methods based on location information. For evaluating the effectiveness of the proposed method, the expectation and variance of remaining interference after ICI reduction is analyzed with respect to Rician $K$-factor. In addition, the service quantity and the bandwidth and computation cost are also discussed. Numerical results are presented to verify our theoretical analysis and the effectiveness of proposed ICI reduction methods. One important observation is that our proposed ICI mitigation method can achieve almost the same service quantity with that obtained on the case without ICI when the train's velocity is 300km/h, that is, ICI has been completely eliminated. Numerical results also show that the scenarios with Rician $K$-factors over 30dB can be considered as AWGN scenarios, which may provide valuable insights on future system designs.

\end{abstract}

\begin{IEEEkeywords}
High-speed railway, MIMO-OFDM, distributed antennas, Doppler spread, ICI reduction.
\end{IEEEkeywords}

\IEEEpeerreviewmaketitle

\section{Introduction}\label{Sec:Introduction}

\lettrine[lines=2]{I}{n} recent decade, high-speed railway (HSR) is experiencing explosive growth and the velocity of trains can be 350km/h or higher in recent future. HSR creates special conditions and challenges for wireless channel access, due to high velocity and rapidly changing environmental conditions. For instance, the orthogonal frequency-division multiplexing (OFDM) techniques adopted by the long term evolution for railway (LTE-R) are extremely sensitive to frequency errors and can be seriously affected by carrier frequency offset (CFO), phase noise, timing offset and Doppler spread, which could trigger the inter-carrier interference (ICI) and deteriorate in high mobility scenarios \cite{li2012radio,luo2013efficient}. In addition, owing to the high mobility feature in HSR, the frequency handover between adjacent base stations (BSs) is a big challenge in system design. In \cite{yeh2010theory,wang2012distributed}, it proposed distributed antenna systems to combat the frequent handover, which introduce a new infrastructure building, increasing the signal to noise ratio (SNR) to user equipments. However, the corresponding ICI problem has not been involved. That is, ICI reduction schemes for such a HSR system need to be carefully designed.

In HSR scenarios, the train may travel in a variety of terrains such as wide plains, viaducts, mountain areas, urban and suburban districts, and tunnels. Thereby, wireless propagation environments for HSR are extremely diverse, in which viaducts and tunnels are the two typical scenarios \cite{luo2013efficient,dong2015power}. In addition to tunnels, viaducts account for the vast majority of Chinese HSRs (more specifically, 86.5\% of railways is elevated in the Beijing–Shanghai HSR \cite{dong2015power}), and there are few multi-paths because of little scattering and reflection. Hence, line-of-sight (LOS) assumption is widely used and the investigation of LOS MIMO channels are presented in \cite{dong2015power,li2013channel}.

Based on LOS assumption, it can be observed that in HSR scenarios, the distribution of arrival of angles (AOAs) is not uniform and may be discrete for the distributed antenna regimes. This is because the Doppler spread is caused by few CFOs corresponding to multiple LOS downlinks. Previous works aimed at situations where the ICI is caused by single CFO or the Doppler spread is with uniformly distributed arrival of angles (AOAs), i.e. Jakes' model\cite{jakes1994microwave}. The corresponding ICI mitigation schemes include frequency equalization \cite{schniter2004low}, ICI self-cancellation \cite{ma2012reduced}, windowing and coding\cite{zhang2003optimum}, etc. For the ICI caused by single CFO, an effective ICI reduction scheme based on the unitary property of ICI matrix is proposed in \cite{fu2005transmitter}, which may be borrowed into the cases with multiple CFOs. 

On the other hand, the wireless channels in HSR are fast time varying, which indicates low complexity real-time ICI reduction is necessary. Therefore, the effective ICI reduction methods for HSR need to be designed by taking into account the new Doppler spread characteristics and the channel fast time varying nature.

The main contributions of this paper are listed as follows:
\begin{enumerate}
\item We first analyze the ICI matrix corresponding to LOS paths. In this scenario, the ICI matrix is caused by Doppler spread with non-uniformly distributed AOAs.

\item Based on the Rician propagation model, we also derived the ICI matrix corresponding to NLOS paths, whose AOAs and amplitudes are uniformly and Gaussian distributed.

\item We prove that the overall ICI matrix is the weighting average of ICI matrices caused by LOS and NLOS paths. The weighting factors are the large-scale fading coefficients. Then, we proved that the overall ICI matrix is almost unitary.

\item Based on the theoretical results, we proposed two low complexity ICI reduction methods aiming at AWGN and Rician channels, which can avoid the matrix inverse calculation and is suitable to the fast time varying scenarios. Finally, the variances of reduced SIR with proposed ICI reduction method are derived.

\item One important observation is obtained that our proposed ICI mitigation method can achieve almost the same service quantity with that obtained on the case without ICI when the velocity of the train is 300km/h. Another observation is that Rician channels with Rician factors over 30dB can be considered as AWGN channels.
\end{enumerate}

The rest of this paper is organized as follows. Firstly, in Section \ref{Sec:SysModel}, the system model of HSR MIMO-OFDM downlinks with distributed antennas is introduced and the Doppler spread effects of LOS paths are mathematically summarized via an ICI matrix. As to the Doppler spread effects of NLOS paths and the propagation model with Rician fading, they are analyzed in section \ref{Sec:AlalysisRician}. In Section \ref{Sec:ICIReduction}, we  proposed two low complexity ICI reduction methods aiming at AWGN and Rician channels. Later on, the mean of remaining signal to interference (SIR) after ICI reduction is analyzed. In section \ref{sec:EffectivenessValuation}, we derived the variance of remaining SIR and accumulate service quantity (ASQ) with respect to Rician factor and the velocity of train. In Section \ref{Sec:NumRes}, the effectiveness of proposed method and the accuracy of previous theoretical results are verified via numerical results. Finally, the conclusions are given in Section \ref{Sec:Conclusion}.

\emph{Notation}: $(\cdot)^{-1}$, $(\cdot)^{T}$denote the inverse and transpose of $(\cdot)$, respectively. The symbols $\mathbf{E}$ and $\mathbf{G}$ denote the identical matrix and the matrix full with elements 1, respectively. $||(\cdot)||_2$ denotes the $\rm{L}_2$ norm of $(\cdot)$.

\section{System Structure and Channel Model of LOS Paths} \label{Sec:SysModel} 



We consider the distributed antenna system shown in Fig. \ref{Fig:CoverageModel}, where the base band unit (BBU) is connected with a serial of radio remote units (RRUs) by optical fibers. At each transmit slots, the RRUs transmit uniform signals to receive antennas. In addition, the distributed RRUs as well as the receivers at trains are equipped with multiple antennas, which forms the multiple input multiple output (MIMO) regimes at the downlinks of HSR communication systems. The antenna number at each individual RRUs and receivers is denoted by $T_x$ and $T_y$, respectively.

\begin{figure*}[htbp]
\centering
\includegraphics[width=0.85\textwidth]{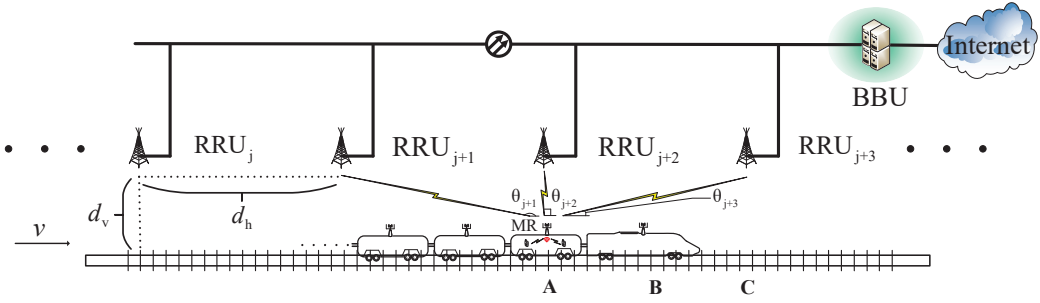}
\caption{Distributed antennas on rail way coverage system. The RRUs are connected to BBU with optical fibers and transmit uniform signals to the receivers at trains simultaneously.} \label{Fig:CoverageModel}
\end{figure*}


Due to the widely used LOS assumption for HSR wireless channel modeling, we firstly consider the LOS paths only, which can be modeled as additive white Gaussian noise (AWGN) channels \cite{li2013channel,dong2015power}. Furthermore, considering the large-scale path loss, receivers can only identify few of the RRUs with relatively short distance. Denote the receive antenna number as $N_r$ and let the resolvable LOS path number corresponding to the $r$-th $(r=1,2,\cdots, N_r)$ receive antenna as $N_t$. As the train running along the railway, the signal transmitted from the $t$-th RRU $(t=1,2,\cdots, N_t)$ to the $r$-th receiver in HSR is modeled as

\begin{equation}\label{equ:HSRReceiverModel}
y_{r,t}[n] = \rho_{r,t} \mathbf{h}_{r,t}[n] x_t[n-\tau] + w_r[n],
\end{equation}
where $y_{r,t}[n]$, $x_t[n-\tau]$ and $w_r[n]$ are the time domain received signal, normalized transmit signal and the circular complex white Gaussian noise, respectively. $\tau$ is the time-delay of propagation path and less than the guard interval. Previous signals at the transmit and receive side are vectors composed by the transmit symbol and receive symbol at $T_x$ and $T_y$ antennas. $\mathbf{h}_{r,t}[n]$ denotes the unitary time domain channel fading matrix and $\rho_{r,t}$ is the power coefficient containing the power control of RRUs side and large-scale path loss between antenna pairs. By denoting the AOA of each LOS path as $\theta_{r,t}$ and involving CFO at each path, one can express the total receive signal at the $r$-th receive antenna as

\begin{equation}\label{equ:TimeDomainSysModel}
y_r[n] = \sum_{t=1} ^{N_t} e^{j\omega_D cos(\theta_{r,t}) n} \rho_{r,t} \mathbf{h}_{r,t}[n] x[n-\tau_{r,t}] + w_r[n],
\end{equation}
where $\omega_D$ is the maximum normalized Doppler frequency offset denoted by $\omega_D = \frac{f v T_s}{C}$. $v$ and $C$ is the traveling velocity of train and light, respectively. $f$ and $T_s$ denote the carrier frequency and the time domain sampling duration of OFDM signals. $\tau_{r,t}$ is the time-delay of each path and the largest $\tau_{r,t}$ is assumed to be less than guard interval. It should be noted that the time-delay of each path is ignored. This is because the distance difference among the paths are relatively short, comparing to the distance of electromagnetic wave propagation in one sampling duration $T_s$ or one can employ the signal delay launching method by using the relative position information between RRUs and receive antenna, which can be acquired via positioning systems, e.g. GPS, Beidou, etc.

After removing the cyclic prefix and performing discrete Fourier transformation on the time domain received signals in \eqref{equ:TimeDomainSysModel}, the demodulated signal at $r$-th receive antenna can be represented as \eqref{equ:DFTICIExpression} at the top of next page. In \eqref{equ:DFTICIExpression}, $Y_r[k]$, $W_r[k]$, $\mathbf{H}_{r,t}[k]$ and $X[k]$ are the frequency domain received signal, AWGN, channel fading matrix and transmitted signal at the $k$-th subcarrier ($k = 1,2,\cdots,N$), respectively. $N$ is the total subcarrier number of OFDM system and $I_{r,t}[n-k]$ is ICI coefficient between the $n$-th and $k$-th subcarriers, which can be expressed as\cite{zhao2001intercarrier,fu2005transmitter}

\begin{equation}\label{equ:I_rk[n-k]}
\begin{split}
I_{r,t}[n-k] = \frac{\rm{sin}(\pi(n+\varepsilon_{r,t}-k))}{N \rm{sin}\left( \frac{\pi}{N}(n+\varepsilon_{r,t}-k) \right)}\cdot \rm{exp}\left(j \pi \left( 1- \frac{1}{N} \right) (n+\varepsilon_{r,t}-k) \right),
\end{split}
\end{equation}
where $\varepsilon_{r,t} = \omega_D cos(\theta_{r,t})$ is the normalized frequency offset between the $r$-th receive antenna and the $t$-th transmit antenna.

\begin{figure*}[!t]
\normalsize
\begin{equation}\label{equ:DFTICIExpression}
\begin{split}
Y_r[k] &= \sum_{t=1} ^{N_t} \sum_{n=1}^{N} \rho_{r,t} I_{r,t}[n-k] \mathbf{H}_{r,t}[k] X[n] + W_r[k] \\
&= \underbrace{\sum_{t=1} ^{N_t} \rho_{r,t} I_{r,t}[0] \mathbf{H}_{r,t}[k] X[k]}_{\rm desired~signal} + \underbrace{\sum_{t=1} ^{N_t} \sum_{n=1,n\neq k}^{N} \rho_{r,t} I_{r,t}[n-k] \mathbf{H}_{r,t}[n] X[n] }_{\rm ICI~components} + W_r[k]
\end{split}
\end{equation}
\hrulefill
\vspace*{4pt}
\end{figure*}

Based on the assumption that the time-delay of each path is ignored, the MIMO channels can be considered as flat fading. \cite{shen2012channel} demonstrates that when in LOS propagation environments, the MIMO channel response matrix can be approximated as $\mathbf{G}$. As illustrated in Section \ref{Sec:Introduction}, $\mathbf{G}$ is the $N_r \times N_t$ matrix fulls of elements 1. Thus, $\mathbf{H}_{r,t}[1] = \mathbf{H}_{r,t}[2] = \cdots = \mathbf{H}_{r,t}[N] = \mathbf{G}$ and all frequency domain received signals at $N$ subcarriers can be represented in matrix form as

\begin{equation}\label{equ:expressionMatFormFlatFading}
\begin{split}
\mathbf{Y}_r = \left\{\sum_{t=1}^{N_t} \rho_{r,t} \mathbf{I}_{r,t} \otimes \mathbf{G} \right\} \mathbf{X} + \mathbf{W}_r = \mathbf{S}_r^{\rm{L}} \mathbf{X} + \mathbf{W}_r,
\end{split}
\end{equation}
where $\mathbf{Y}_r$, $\mathbf{X}$ and $\mathbf{W}_r$ are the $N T_y \times 1$, $N T_x \times 1$ and $N T_y \times 1$ column vectors constituted by the permutation of corresponding symbol vectors in \eqref{equ:DFTICIExpression} from $1$ to $N$, respectively. The $N \times N$ matrix $\mathbf{I}_{r,t}$ is denoted by

\begin{equation}\label{equ:I_mat}
\begin{split}
& \mathbf{I}_{r,t} =\\
&\left[                 
  \begin{array}{cccc}   
    I_{r,t}[0] & I_{r,t}[1] & \cdots	& I_{r,t}[N-1]\\  
    I_{r,t}[-1] & I_{r,t}[0] & \cdots	& I_{r,t}[N-2]\\  
    \vdots & \vdots & \vdots & \vdots\\  
    I_{r,t}[-(N-1)] & I_{r,t}[-(N-2)] & \cdots	& I_{r,t}[0]\\  
  \end{array}
\right].\\                 
\end{split}
\end{equation}
The non-diagonal elements of $\mathbf{I}_{r,t}$ correspond to the ICI components in \eqref{equ:DFTICIExpression}. The symbol $\otimes$ denotes the Kronecker product. $\mathbf{S}_r^{\rm{L}}$ is the sum of $N_t$ channel fading matrices weighted by $\rho_{r,t}$ and thus, it denotes the channel fading matrix corresponding to received signals from $N_t$ transmit antennas. The superscript of $\mathbf{S}_r^{\rm{L}}$ denotes that the channel fading matrix corresponds to the LOS paths.


\section{Analysis on the Effects of NLOS paths} \label{Sec:AlalysisRician}

Previous works on HSR wireless communications mainly focus on the AWGN channels \cite{dong2015power,li2013channel} or the estimation of Doppler spread \cite{aboutorab2012new,bellili2013low,gao2015subspace,qu2015two}. In this part, we derived the Doppler spread model in HSR scenarios with Rician fading model. That is, with known Rician K-factor and velocity, the effects of non-line of sight (NLOS) paths is analyzed under the Rician channel fading model. It can be noted that the analyzed propagation model is one of the specific scenarios of base expansion model (BEM) \cite{giannakis1998basis}. This is because, in HSR scenarios, there exist a strong LOS path and the NLOS paths are assumed to be Gaussian distributed in this paper. 


\subsection{Rician Channel Fading with Doppler Spread}\label{Sec:RicianDoppler}


Review the Rician channel fading shown in Fig. \ref{Fig:ScatteringPath}, there exist a LOS path and a large number of reflected/scattered (i.e. NLOS) paths in one delay window. The unitary channel fading between one of the antenna pairs in \eqref{equ:HSRReceiverModel} can be modeled as \cite{tse2005fundamentals}

\begin{equation}\label{equ:Rician}
h_{r,t}[n] = \sqrt{\frac{K}{K+1}} e^{j\phi} + \sqrt{\frac{1}{K+1}} \mathcal{CN}(0,1),
\end{equation}
where the first and second term corresponds to the LOS path and NLOS paths, respectively. $\phi$ is the phase of LOS path and $\mathcal{CN}(0,1)$ is the aggression of large number of NLOS paths, which is circular complex Gaussian distributed. $K$ is the Rician factor and denotes the power ratio of the LOS path to NLOS paths.

\begin{figure}[htbp]
\centering
\includegraphics[width=0.55\textwidth]{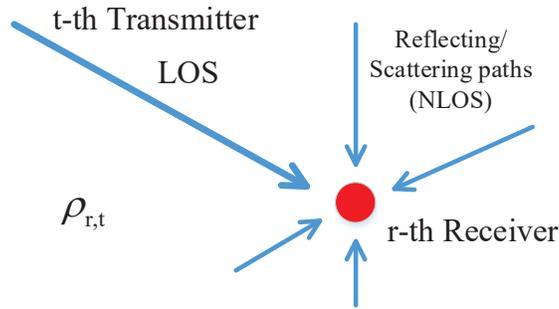}
\caption{Channel propagation model between the t-th transmitter and the r-th receiver with LOS and reflecting/scattering paths, the large-scale channel gain between antenna pair is $\rho_{r,t}$.} \label{Fig:ScatteringPath}
\end{figure}

The effect of LOS paths with Doppler shift has been analyzed in section \ref{Sec:SysModel} with \eqref{equ:I_rk[n-k]} and \eqref{equ:DFTICIExpression}. As to the NLOS paths, we assume that the amplitudes are Gaussian distributed and the path number is $N_s$ ($N_s \rightarrow \infty$). Denote the amplitudes of each paths as $a_i$ ($i=1,\cdots,N_s)$. It can be noted that, according to the central limit theorem \cite{kay2013fundamentals}, the distribution of $a_i$ doesn't affect the aggregation of NLOS paths. Thus, one can immediately derive that $a_i \sim \sqrt{\frac{1}{K+1}} \mathcal{CN}(0,\frac{1}{N_s})$. The effects of Doppler spread caused by the Doppler shifts of NLOS paths can be expressed as

\begin{equation}\label{equ:DopplerSpreadSumExpress}
D[n-k] = \sum_{i=1}^{N_s} a_i \frac{sin(\pi (n-k+\omega_D cos(\theta_i)))}{N sin(\frac{\pi}{N}(n-k + \omega_D cos(\theta_i) ))},
\end{equation}
which is the interference amplitude from the $n$-th to the $k$-th subcarrier. $\theta_i$ is the AOA of the $i$-th NLOS path and $\theta_i \sim U[-\pi,\pi]$. The statistics of $D[n-k]$ can be summarized via following lemma.

\begin{lem}\label{lem:DopplerSpreadVariance}
$D[n-k]$ is complex Gaussian distributed and

\begin{equation*}
D[n-k] \sim 
\begin{cases}
\mathcal{CN}(0,\frac{1}{K+1}) &, ~ n=k \\
\mathcal{CN}(0,\frac{\omega_D^2}{2(K+1)(n-k)^2}) &, ~ n \neq k ,\\
\end{cases}
\end{equation*}
where $\omega_D$ is the maximum normalized Doppler frequency offset.
\end{lem}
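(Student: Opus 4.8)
The plan is to exploit the fact that $D[n-k]$ is a sum of $N_s$ independent contributions and to push $N_s \to \infty$ through a conditioning argument rather than a direct central-limit estimate. Writing $m = n-k$ and $g(x) = \frac{\sin(\pi x)}{N\sin(\pi x/N)}$ for the Dirichlet-type kernel appearing in \eqref{equ:DopplerSpreadSumExpress}, I would set $c_i = g(m + \omega_D\cos\theta_i)$ so that $D[m] = \sum_{i=1}^{N_s} a_i c_i$. Since each $a_i \sim \sqrt{\frac{1}{K+1}}\,\mathcal{CN}(0,\frac{1}{N_s})$ is zero-mean complex Gaussian and independent of the angles $\{\theta_i\}$, conditioning on $\{\theta_i\}$ turns $D[m]$ into a linear combination of independent zero-mean complex Gaussians with deterministic real weights $c_i$. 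Hence $D[m]\mid\{\theta_i\} \sim \mathcal{CN}\!\left(0,\tfrac{1}{N_s(K+1)}\sum_{i=1}^{N_s}c_i^2\right)$, which already fixes the Gaussian form and the zero mean (the latter because $\mathbb{E}[a_i]=0$).

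I would then pin down the variance with the law of large numbers. As $N_s\to\infty$, the conditional variance $\frac{1}{N_s(K+1)}\sum_i c_i^2$ converges almost surely to the deterministic limit $\sigma_m^2 = \frac{1}{K+1}\,\mathbb{E}_\theta\!\left[g(m+\omega_D\cos\theta)^2\right]$, because the $\theta_i$ are i.i.d. uniform on $[-\pi,\pi]$. Passing this convergence through the (bounded) conditional characteristic function gives $D[m]\to\mathcal{CN}(0,\sigma_m^2)$ in distribution, so it remains only to evaluate the single integral $\mathbb{E}_\theta[g(m+\omega_D\cos\theta)^2] = \frac{1}{2\pi}\int_{-\pi}^{\pi} g(m+\omega_D\cos\theta)^2\,d\theta$ in the two regimes $m=0$ and $m\neq 0$.

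For both cases I would use that the normalized Doppler offset $\omega_D$ is small. When $m=0$, a Taylor expansion of $g$ near the origin gives $g(\omega_D\cos\theta)=1+O(\omega_D^2)$, so the integral equals $1$ to leading order and the variance is $\frac{1}{K+1}$. When $m\neq 0$, the argument $m+\omega_D\cos\theta$ lies near the nonzero integer $m$, where $g(m)=0$; writing $\delta=\omega_D\cos\theta$ and using $\sin(\pi(m+\delta))=(-1)^m\sin(\pi\delta)$ yields $g(m+\delta)\approx \frac{(-1)^m\pi\delta}{N\sin(\pi m/N)}$, whence $g^2\approx \frac{\pi^2\omega_D^2\cos^2\theta}{N^2\sin^2(\pi m/N)}$. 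Taking $\mathbb{E}_\theta[\cos^2\theta]=\tfrac12$ and applying the small-angle approximation $N\sin(\pi m/N)\approx\pi m$ (valid for $|m|\ll N$) collapses this to $\frac{\omega_D^2}{2m^2}$, giving the claimed variance $\frac{\omega_D^2}{2(K+1)(n-k)^2}$.

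The main obstacle is the variance integral in the $m\neq 0$ case, which chains two distinct approximations: linearizing $g$ about the integer $m$ (using the smallness of $\omega_D$, with $g(m)=0$ removing the constant term so that $g(m+\delta)=O(\delta)$ and $g^2=O(\omega_D^2)$) and then replacing $N\sin(\pi m/N)$ by $\pi m$ (using the smallness of $m/N$). Controlling the neglected higher-order terms in both expansions is where the care lies; by contrast, the Gaussianity and zero mean come essentially for free from the conditioning-plus-LLN argument.
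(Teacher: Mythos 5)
Your proposal is correct and lands on the same variance integral as the paper, but it reaches Gaussianity by a different probabilistic mechanism. The paper's Appendix A invokes the central limit theorem directly on the sum $\sum_i a_i I(\theta_i)$ of i.i.d. terms (indeed it stresses that, by the CLT, the distribution of the amplitudes $a_i$ is irrelevant), and then obtains the variance as $N_s\,\mathbb{E}(a_i^2)\,\mathbb{E}(I(\theta)^2)$ with the same two-case evaluation of $\frac{1}{2\pi}\int_{-\pi}^{\pi} g(m+\omega_D\cos\theta)^2\,d\theta$ that you perform (approximating $\sin(\pi\omega_D\cos\theta)$ by $\pi\omega_D\cos\theta$, dropping $\omega_D$ against $m$ in the denominator, and using $\mathbb{E}[\cos^2\theta]=\tfrac12$). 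Your route instead conditions on the angles, uses the exact Gaussianity of the $a_i$ to get an exactly complex-Gaussian conditional law, and then applies the law of large numbers to the conditional variance. What your version buys is rigor in the limit statement: the Gaussian form holds for every $N_s$ conditionally, and only the variance needs a limiting argument, whereas the paper's CLT step is stated loosely (and, as written, conflates the sum with a normalized sum). What the paper's version buys is independence from the amplitude distribution: its conclusion survives if the $a_i$ are merely i.i.d. with the stated second moment, while your conditioning step leans on $a_i$ being Gaussian (which the paper does assume). Both proofs share the same computational core and the same unquantified small-$\omega_D$, small-$m/N$ approximations, so neither is more complete on the error-control point you flag at the end.
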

\begin{proof}
Proof of this lemma is given in appendix \ref{Appen:DopplerVariance}.
\end{proof}

It can be noticed that the variance of Doppler spread decreases linearly with the index difference $(n-k)$ and is determined by the velocity of train as well as the $K$ factor of Rician channels. In addition, the AOA of the LOS path in Fig. \ref{Fig:ScatteringPath} doesn't affect the Doppler spread of NLOS paths, which is intuitionistic.

\subsection{Propagation Model With Rician Fading}\label{Sec:PropagationWithRicianFading}

By substituting the results in Lemma \ref{lem:DopplerSpreadVariance} into \eqref{equ:expressionMatFormFlatFading}, the channel propagation model with Rician fading can be modified as \eqref{equ:ModifiedExpressionMatFormFlatFading}, shown at the top of next page. $\mathbf{S}_r$ is the channel fading matrix of Rician channels with Doppler spread, wherein $\mathbf{S}_r^{\rm{L}}$ and $\mathbf{S}_r^{\rm{N}}$ corresponds to the LOS and NLOS paths. $\mathbf{R}_t$ is the aggression of these paths and $\mathbf{R}_t=r_t \mathbf{G}$ ($r_t\sim \mathcal{CN}(0,1)$), when beam-forming is adopted. $\mathbf{I}_D$ is the ICI coefficient matrix caused by the Doppler shifts of NLOS paths and can be easily obtained by replacing $I_{r,t}[n-k]$ in \eqref{equ:I_mat} with $I_{D}[n-k]$, i.e.

\begin{figure*}[!t]
\normalsize
\begin{equation}\label{equ:ModifiedExpressionMatFormFlatFading}
\begin{split}
\mathbf{Y}_r = \sqrt{\frac{K}{K+1}}\left\{\sum_{t=1}^{N_t} \rho_{r,t} \mathbf{I}_{r,t} \otimes \mathbf{G} \right\} \mathbf{X} + \sqrt{\frac{1}{K+1}}\left\{\sum_{t=1}^{N_t} \rho_{r,t} \mathbf{I}_{D} \otimes \mathbf{R}_t \right\} \mathbf{X} + \mathbf{W}_r = \underbrace{(\mathbf{S}_r^{\rm{L}} + \mathbf{S}_r^{\rm{N}} )}_{\mathbf{S}_r} \mathbf{X} + \mathbf{W}_r.
\end{split}
\end{equation}
\hrulefill
\vspace*{4pt}
\end{figure*}

\begin{equation}\label{equ:I_Dmat}
\begin{split}
& \mathbf{I}_{D} =\\
&\left[                 
  \begin{array}{cccc}   
    I_{D}[0] & I_{D}[1] & \cdots	& I_{D}[N-1]\\  
    I_{D}[-1] & I_{D}[0] & \cdots	& I_{D}[N-2]\\  
    \vdots & \vdots & \vdots & \vdots\\  
    I_{D}[-(N-1)] & I_{D}[-(N-2)] & \cdots	& I_{D}[0]\\  
  \end{array}
\right],\\                 
\end{split}
\end{equation}
where

\begin{equation}
I_{D}[n-k] = 
\begin{cases}
1 &, ~ n=k \\
\frac{(-1)^{n-k}\omega_D}{\sqrt{2}(n-k)} &, ~ n \neq k ,\\
\end{cases}
\end{equation}

The estimation of $\mathbf{S}_r^{\rm{L}}$ in \eqref{equ:expressionMatFormFlatFading} is actually equivalent to the estimation of large scale channel fading $\rho_{r,t}$, which can be acquired with known relative positions between RRUs and receive antennas \cite{medeisis2000use}. In addition, the estimation of $\mathbf{S}_r^{\rm{N}}$ in \eqref{equ:ModifiedExpressionMatFormFlatFading} is equivalent to the estimation of $\mathbf{R}_t$, which can be acquired via channel estimation methods. Hence, we assume that $\mathbf{S}_r$ has been perfectly estimated. In order to mitigate the ICI influence on the OFDM symbols detection, it is necessary to eliminate $\mathbf{S}_r$ in fact.

\section{ICI Reduction Schemes}\label{Sec:ICIReduction}


As previously illustrated, ICI components in \eqref{equ:DFTICIExpression} can be reduced by counteract operation method via repetition coding (i.e. ICI self-cancellation)\cite{ma2012reduced}, which could reduce the bandwidth efficiency for fast time varying channels. Likewise, ICI can be reduced via frequency equalization\cite{hwang2009ofdm,schniter2004low}, which can mitigate ICI efficiently but is limited by the complexity of matrix inverse calculation, especially with large amount of subcarriers. Both of the methods can either reduce spectrum efficiency or with very high computational complexity, which limit their applications in HSR scenarios.

In this section, we shall utilize the approximately unitary property of the channel fading matrix $\mathbf{S}_r$ to retain the spectrum efficiency and achieve low complexity frequency equalization for real-time ICI reduction over fast time varying HSR channels. For the scenarios with and without NLOS reflecting/scattering paths, two distinct ICI reduction methods are proposed and the effectiveness of these two methods shall be analyzed in section \ref{sec:EffectivenessValuation}.

\subsection{Orthogonal Property, ICI Reduction Method and SIR Analysis in AWGN Scenarios}\label{Sec:OrthogonalEqualizationForLOS}
Let us first analyze the properties of channel fading matrix and summarize the unitary property of $\mathbf{S}_r^{\rm{L}}$ as following lemma.

\begin{lem}\label{lem:OriginalOrthogonalLOS}
$\mathbf{S}_r^{\rm{L}}$ defined in \eqref{equ:expressionMatFormFlatFading} is approximately orthogonal, that is,
\begin{equation}\label{equ:OriginalOrthogonal}
{(\mathbf{S}_r^{\rm{L}})}^{T} \mathbf{S}_r^{\rm{L}} \approx \mathbf{E} \otimes (\bm{\beta} ^T \bm{\beta}),
\end{equation}
where $\bm{\beta} = \sum_{t=1}^{N_t} \rho_{r,t} \mathbf{G}$ is the weighted sum of LOS fading channels from $L$ RRUs.
\end{lem}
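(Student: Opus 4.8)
The plan is to exploit the Kronecker structure of $\mathbf{S}_r^{\rm{L}}$ in \eqref{equ:expressionMatFormFlatFading} to reduce the claim to a statement purely about the single-CFO ICI matrices $\mathbf{I}_{r,t}$. Writing $\mathbf{S}_r^{\rm{L}} = \sum_{t=1}^{N_t}\rho_{r,t}\,\mathbf{I}_{r,t}\otimes\mathbf{G}$ and applying the mixed-product rule $(\mathbf{A}\otimes\mathbf{B})(\mathbf{C}\otimes\mathbf{D}) = (\mathbf{A}\mathbf{C})\otimes(\mathbf{B}\mathbf{D})$ together with $(\mathbf{A}\otimes\mathbf{B})^{T} = \mathbf{A}^{T}\otimes\mathbf{B}^{T}$, I would first expand
\begin{equation*}
(\mathbf{S}_r^{\rm{L}})^{T}\mathbf{S}_r^{\rm{L}} = \sum_{t=1}^{N_t}\sum_{t'=1}^{N_t}\rho_{r,t}\rho_{r,t'}\,\bigl(\mathbf{I}_{r,t}^{T}\mathbf{I}_{r,t'}\bigr)\otimes\bigl(\mathbf{G}^{T}\mathbf{G}\bigr).
\end{equation*}
Because the same double sum with $\mathbf{I}_{r,t}^{T}\mathbf{I}_{r,t'}$ replaced by $\mathbf{E}$ collapses to $\mathbf{E}\otimes(\bm{\beta}^{T}\bm{\beta})$, using $\bm{\beta}^{T}\bm{\beta} = (\sum_t\rho_{r,t})^2\,\mathbf{G}^{T}\mathbf{G}$, the entire lemma reduces to the single approximation $\mathbf{I}_{r,t}^{T}\mathbf{I}_{r,t'}\approx\mathbf{E}$ for every antenna pair $(t,t')$. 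The common all-ones factor $\mathbf{G}^{T}\mathbf{G}$ is shared by both sides and plays no role beyond bookkeeping, so it is not where the difficulty lies.

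The core step is to establish this orthogonality from the closed form \eqref{equ:I_rk[n-k]}. I would recognize each ICI coefficient as a geometric sum, $I_{r,t}[n-k] = \frac{1}{N}\sum_{p=0}^{N-1} e^{\,j2\pi(\varepsilon_{r,t}+n-k)p/N}$, which exhibits $\mathbf{I}_{r,t}$ as the DFT conjugation $\mathbf{I}_{r,t} = \mathbf{F}\,\bm{\Phi}_t\,\mathbf{F}^{H}$ of the diagonal CFO-phase matrix $\bm{\Phi}_t = \mathrm{diag}(e^{\,j2\pi\varepsilon_{r,t}p/N})$, with $\mathbf{F}$ the unitary DFT matrix; this is precisely the unitary property of the ICI matrix invoked in \cite{fu2005transmitter}. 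The product then telescopes to $\mathbf{I}_{r,t}^{H}\mathbf{I}_{r,t'} = \mathbf{F}\,\mathrm{diag}(e^{\,j2\pi(\varepsilon_{r,t'}-\varepsilon_{r,t})p/N})\,\mathbf{F}^{H}$, i.e.\ again an ICI matrix whose \emph{effective} normalized offset is the difference $\Delta\varepsilon_{t,t'} = \varepsilon_{r,t'}-\varepsilon_{r,t}$. For the diagonal terms $t=t'$ this offset vanishes and the product equals $\mathbf{E}$ exactly, while for $t\neq t'$ it is a near-identity ICI matrix whose off-diagonal magnitudes are $O(\Delta\varepsilon_{t,t'})$.

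To close the argument I would bound $\Delta\varepsilon_{t,t'}$. Since $\varepsilon_{r,t} = \omega_D\cos\theta_{r,t}$, we have $|\Delta\varepsilon_{t,t'}|\le 2\omega_D$, and $\omega_D = fvT_s/C\ll1$ at realistic train speeds, so every cross term satisfies $\mathbf{I}_{r,t}^{H}\mathbf{I}_{r,t'} = \mathbf{E} + O(\omega_D)$; substituting back recovers the assertion. The main obstacle, and where I would spend the most care, is reconciling the literal transpose in the statement with the conjugate transpose that the unitary identity naturally produces. A direct entrywise evaluation shows $\mathbf{I}_{r,t}^{T}\mathbf{I}_{r,t} = \tfrac{1-e^{\,j2\pi\varepsilon_{r,t}}}{N}\,\mathbf{J}+e^{\,j2\pi\varepsilon_{r,t}}\,\mathbf{E}$, with $\mathbf{J}$ the $N\times N$ all-ones matrix, which differs from $\mathbf{E}$ by an additional $O(\varepsilon_{r,t}) = O(\omega_D)$ term instead of being exact. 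Thus under the transpose even the diagonal contributions are only \emph{approximately} orthogonal, and the whole lemma ultimately rests on the smallness of $\omega_D$; I would make this quantitative by tracking the aggregate $O(\omega_D)$ residual through the double sum and confirming it is dominated by the leading $\mathbf{E}\otimes(\bm{\beta}^{T}\bm{\beta})$ term.
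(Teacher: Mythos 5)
Your proposal is correct, and it reaches the lemma by a genuinely different route than the paper. The paper's Appendix B never invokes the DFT structure: it works entrywise, substituting the large-$N$ simplification of $I_{r,t}[n-k]$ into $\mathbf{\Lambda}_r^{\rm{L}}(k,k+m)=\sum_u\sum_i\sum_j(\cdots)\,\mathbf{G}^T\mathbf{G}$, evaluating the resulting sums over $u$ with $\zeta(2)=\pi^2/6$, and then arguing that the diagonal blocks tend to $\bm{\beta}^T\bm{\beta}$ while the off-diagonal blocks carry factors of order $O(\omega_D)+O(\omega_D^2)$ as $\omega_D\to 0$. You instead use the mixed-product rule to reduce everything to $\mathbf{I}_{r,t}^{T}\mathbf{I}_{r,t'}\approx\mathbf{E}$ and then diagonalize each $\mathbf{I}_{r,t}$ as $\mathbf{F}\bm{\Phi}_t\mathbf{F}^{H}$, so that the product of two ICI matrices is again an ICI matrix with effective offset $\varepsilon_{r,t'}-\varepsilon_{r,t}=O(\omega_D)$. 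Your route is conceptually cleaner and more structural, and your explicit computation $\mathbf{I}_{r,t}^{T}\mathbf{I}_{r,t}=e^{j2\pi\varepsilon_{r,t}}\mathbf{E}+\tfrac{1-e^{j2\pi\varepsilon_{r,t}}}{N}\mathbf{J}$ is a genuinely careful point: it makes precise the transpose-versus-conjugate-transpose discrepancy that the paper only waves at ("vanishing the imaginary parts for tiny $\varepsilon_{r,t}$"), showing that even the $t=t'$ terms are only $O(\omega_D)$-close to orthogonal under the literal transpose. What the paper's heavier entrywise computation buys in exchange is the explicit residual expressions \eqref{equ:ResDiagonalOfL} and \eqref{equ:ResNonDiagonalOfL} and hence the quantitative SIR bounds \eqref{equ:SIRApprox} (with the $\zeta(4)$ constants), which are needed later in Sections \ref{Sec:ICIReduction} and \ref{sec:EffectivenessValuation}; your argument establishes the lemma itself but would need an additional entrywise pass over $\mathbf{F}\,\mathrm{diag}(e^{j2\pi\Delta\varepsilon_{t,t'}p/N})\,\mathbf{F}^{H}$ to recover those constants.
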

\begin{proof}
The proof of this lemma is given in appendix \ref{Appen:LOSUnitary}. 
\end{proof}

Based on the result of Lemma 1, the approximately frequency equalization can be executed by replacing $\gamma_{\beta}(\mathbf{S}_r^{\rm{L}})^{-1}$ with $(\mathbf{S}_r^{\rm{L}})^{T} / \gamma_{\beta}$. $\gamma_{\beta}^2 = {\rm{trace}} \left(({\mathbf{S}_r^{\rm{L}})}^{T} \mathbf{S}_r^{\rm{L}} \right) = N T_y T_x (\sum_{t=1}^{N_t} \rho_{r,t})^2$, so that $\gamma_{\beta}$ is the channel gain of summed fading channels from $N_t$ transmit antennas to the $r$-th receive antenna. Hence, $(\mathbf{S}_r^{\rm{L}})^T / \gamma_{\beta}$ is normalized and retains the power of frequency equalized signals $\hat{\mathbf{Y}}_r$, which can be immediately expressed as

\begin{equation} \label{equ:FrequencyEqualization}
\begin{split}
\hat{\mathbf{Y}}_r = \frac{1}{\gamma_{\beta}} (\mathbf{S}_r^{\rm{L}})^T \mathbf{S}_r^{\rm{L}} \mathbf{X} + \frac{1}{\gamma_{\beta}} (\mathbf{S}_r^{\rm{L}})^T \mathbf{W}_r = \frac{\gamma_{\beta}}{N T_x} (\mathbf{E} \otimes \mathbf{G}') \mathbf{X} + \sqrt{\gamma_{\beta}^2 / {\rm{SIR}}_k + 1}\mathbf{W}_{r}^{'},
\end{split}
\end{equation}
which avoids the complexity of calculating $\mathbf{S}_{r}^{-1}$. $\rm{SIR}_k$ denotes the SIR at the $k$-th subcarrier after approximately frequency equalization and $\mathbf{W}_{r}^{'}$ is the normalized AWGN and the noise power coefficient includes the remaining interference power $\gamma_{\beta}^2 / {\rm{SIR}}_k$ and the original normalized Gaussian noise power in \eqref{equ:expressionMatFormFlatFading}. $\mathbf{G}'$ is the $T_x \times T_x$ matrix full of elements 1. It can be observed that the ICI caused by high mobility has been eliminated.

To analyze the ${\rm{SIR}}_k$ in \eqref{equ:FrequencyEqualization}, let ${(\mathbf{S}_r^{\rm{L}})}^T \mathbf{S}_r^{\rm{L}}=\mathbf{\Lambda}_r^{\rm{L}}$. Notice the definition of Kronecker product, $\mathbf{S}_r^{\rm{L}}$ can be considered as partitioned matrix and the size of submatrices is $T_y \times T_x$. In addition, because of the frequency domain transmitted signals $X[k]$ are normalized and independent with each other, $\rm{SIR}_k$ can be expressed as

\begin{equation}\label{equ:SIRExpression}
{\rm{SIR}}_k = \frac{||\mathbf{\Lambda}_r^{\rm{L}}(k,k)||_{2}^{2}}{\sum_{i=1,\neq k}^{N} ||\mathbf{\Lambda}_r^{\rm{L}}(k,i)||_{2}^{2}},
\end{equation}
where $\mathbf{\Lambda}^{\rm{L}}_r(k,i)$ is the submatrix at the $k$-th row and the $i$-th column. The diagonal and non-diagonal submatrix $\mathbf{\Lambda}^{\rm{L}}_r(k,k)$ and $\mathbf{\Lambda}^{\rm{L}}_r(k,k+m)$ can be calculated by \eqref{equ:ResDiagonalOfL} and \eqref{equ:ResNonDiagonalOfL}, which are shown at the top of next page. Hence, the $\rm{SIR}_k$ can be derived by substituting \eqref{equ:ResDiagonalOfL} and \eqref{equ:ResNonDiagonalOfL} into \eqref{equ:SIRExpression}. 

\begin{figure*}[!t]
\normalsize
\begin{equation} \label{equ:ResDiagonalOfL}
\mathbf{\Lambda}_r^{\rm{L}}(k,k) = \sum_{i=1}^{N_t} \sum_{j=1}^{N_t} \frac{\rho_{r,i} \rho_{r,j} sin(\pi \varepsilon_{r,i}) sin(\pi \varepsilon_{r,j})}{\pi^2 \varepsilon_{r,i} \varepsilon_{r,j}} \mathbf{G}^T \mathbf{G} + \sum_{i=1}^{N_t} \sum_{j=1}^{N_t} \frac{\rho_{r,i} \rho_{r,j} sin(\pi \varepsilon_{r,i}) sin(\pi \varepsilon_{r,j})}{3} \mathbf{G}^T \mathbf{G}
\end{equation}

\begin{equation} \label{equ:ResNonDiagonalOfL}
\begin{split}
\mathbf{\Lambda}_r^{\rm{L}}(k,k+m) = &\bigg\{\sum_{i=1}^{N_t} \sum_{j=1}^{N_t} \frac{\rho_{r,i} \rho_{r,j} sin(\pi \varepsilon_{r,i}) sin(\pi \varepsilon_{r,j})}{\pi^2 \varepsilon_{r,i}(\varepsilon_{r,j} - m)}\mathbf{G}^T \mathbf{G} + \sum_{i=1}^{N_t} \sum_{j=1}^{N_t} \frac{\rho_{r,i} \rho_{r,j} sin(\pi \varepsilon_{r,i}) sin(\pi \varepsilon_{r,j})}{\pi^2 (\varepsilon_{r,i} + m)\varepsilon_{r,j}}\mathbf{G}^T \mathbf{G} \\
&+ 2 \sum_{i=1}^{N_t} \sum_{j=1}^{N_t} \frac{\rho_{r,i} \rho_{r,j} sin(\pi \varepsilon_{r,i}) sin(\pi \varepsilon_{r,j})}{\pi^2 m^2}\mathbf{G}^T \mathbf{G} \bigg\} (-1)^m
\end{split}
\end{equation}
\hrulefill
\vspace*{4pt}
\end{figure*}

It can be noticed that the precise expression of ${\rm{SIR}}_k$ is much too complex and can not provide brief insight on the performance of proposed ICI reduction method. Thus, a closed form approximation is provided in section \ref{sec:EffectivenessValuation} by the maximum and minimal ${\rm{SIR}}_k$ and the accuracy has been verified by simulation results. The SIR dynamic range at the $k$-th subcarrier as the train running alone the railway can be simplified as

\begin{equation}\label{equ:SIRApprox}
\begin{cases}
{\rm{max}}({\rm{SIR}}_k) &= \frac{\psi}{72 \omega_D^{4} cos^4(\theta_A) \zeta(4)}\\
{\rm{min}}({\rm{SIR}}_k) &= \frac{1}{8 \omega_{D}^4 cos^4(\theta_B) \zeta(4)}.
\end{cases}
\end{equation}
The variables $\psi = \frac{\rho_{j+2}^2}{\rho_{j+1}^2}$ and $\cos(\theta_A) = d_{\rm{h}} / \sqrt{d^2_{\rm{h}} + d^2_{\rm{v}}}$ are the ratio of $\rho_{j+2}^2$ to $\rho_{j+1}^2$ and the cosine of signal AOA from the $(j+3)$-th RRU, when the train is at A in Fig. \ref{Fig:CoverageModel}. $\zeta(4)=\frac{\pi^4}{90}$ is the Riemann Zeta function\cite{titchmarsh1986theory}. Proofs of \eqref{equ:ResDiagonalOfL}, \eqref{equ:ResNonDiagonalOfL}, \eqref{equ:SIRApprox} are also given in appendix \ref{Appen:LOSUnitary}.

The orthogonality based frequency equalization scheme can be summarized as algorithm \ref{alg:ICIReductionAlgLOS}. It can be observed that due to $\mathbf{S}_r$ can be expressed via the weighted sum of $\mathbf{I}_{r,t} \otimes \mathbf{G}$, proposed algorithm avoids the computation complexity and bandwidth cost for Doppler spread estimation, i.e. the overall channel fading matrix affected by Doppler spread can be easily generated via the known train velocity and relative positions between RRUs and receive antennas. This is because the fast time-varying small-scale channel fading information is omitted and matrix $\mathbf{G}$ is fixed and known, while the large-scale channel fading can be predicted via specific methods, e.g. the Okumura-Hata propagation prediction model \cite{medeisis2000use}. In addition, employing $(\mathbf{S}_r^{\rm{L}})^T$ rather than $(\mathbf{S}_r^{\rm{L}})^{-1}$ reduces the computational complexity of frequency equalization.

\begin{algorithm}[htb] 
\caption{Frequency Equalization in LOS Scenarios} 
\label{alg:ICIReductionAlgLOS} 
\begin{algorithmic}

\REQUIRE ~~ 
\begin{enumerate}
\item The set of system parameters, \{$T_x,T_y, N$\};
\item Received signals, $\mathbf{Y}_r$;  
\item Map of large-scale fading, $\mathbf{M}_L$;  
\item Velocity of trains, $v$.
\end{enumerate}

\ENSURE Frequency equalized signals, $\hat{\mathbf{Y}}_r$ 

\textbf{Initialization:} Search large-scale fading infos (i.e. $\rho_{r,t}$) from $\mathbf{M}_L$.

\textbf{While} large-scale fading infos remains stable \textbf{do}

\begin{enumerate}
\item Generate $\mathbf{I}_{r,t}$ with \eqref{equ:I_mat};

\item $\mathbf{S}_r^{\rm{L}}$ $\leftarrow$ $\sum_{t=1}^L \rho_{r,t} \mathbf{I}_{r,t} \otimes \mathbf{G}$.

\item $\gamma_{\beta}$ $\leftarrow$ $\sqrt{N T_y T_x}(\sum_{t=1}^L \rho_{r,t})$;

\item $\hat{\mathbf{Y}}_r$ $\leftarrow$ $\frac{1}{\gamma_{\beta}} \left(\mathbf{S}_r^{\rm{L}}\right)^T \mathbf{Y}_r$;

\end{enumerate}

\textbf{Return:} $\hat{\mathbf{Y}}_r$

\end{algorithmic}
\end{algorithm}

\subsection{Orthogonal Property, ICI Reduction Method and SIR Analysis in Rician Fading Scenarios}\label{Sec:OrthogonalEqualizationForRician}

For the scenarios with Rician fading channels, more channel information and computation complexity is need to to reduce ICI. This is because the random part in \eqref{equ:Rician}, which is caused by the random scattering/reflecting, needs to be estimated and eliminated. Similarly, the orthogonal property holds for the channel fading matrices with Rician Doppler propagation model, which can be expressed as following Lemma.

\begin{lem}\label{lem:OriginalOrthogonalRician}
$\mathbf{S}_r$ defined in \eqref{equ:ModifiedExpressionMatFormFlatFading} is approximately orthogonal, that is,
\begin{equation}\label{equ:OriginalOrthogonal}
\mathbf{S}_r^{T} \mathbf{S}_r \approx \mathbf{E} \otimes (\bm{\zeta} ^T \bm{\zeta}),
\end{equation}
where $\bm{\zeta} = \sum_{t=1}^{N_t} \rho_{r,t} \left( \sqrt{\frac{K}{K+1}} \mathbf{G}+ \sqrt{\frac{1}{K+1}}\mathbf{R}_t \right) $ is the weighted sum of Rician fading channels from $L$ RRUs.
\end{lem}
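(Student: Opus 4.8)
The plan is to expand $\mathbf{S}_r^T\mathbf{S}_r$ by bilinearity and show that each resulting block reproduces the corresponding term of $\bm{\zeta}^T\bm{\zeta}$. Writing $\mathbf{S}_r=\mathbf{S}_r^{\rm{L}}+\mathbf{S}_r^{\rm{N}}$ with $\mathbf{S}_r^{\rm{L}}=\sqrt{\tfrac{K}{K+1}}\sum_t\rho_{r,t}\,\mathbf{I}_{r,t}\otimes\mathbf{G}$ and $\mathbf{S}_r^{\rm{N}}=\sqrt{\tfrac{1}{K+1}}\sum_t\rho_{r,t}\,\mathbf{I}_D\otimes\mathbf{R}_t$ as read off from \eqref{equ:ModifiedExpressionMatFormFlatFading}, I would first write
\begin{equation*}
\mathbf{S}_r^T\mathbf{S}_r=(\mathbf{S}_r^{\rm{L}})^T\mathbf{S}_r^{\rm{L}}+(\mathbf{S}_r^{\rm{L}})^T\mathbf{S}_r^{\rm{N}}+(\mathbf{S}_r^{\rm{N}})^T\mathbf{S}_r^{\rm{L}}+(\mathbf{S}_r^{\rm{N}})^T\mathbf{S}_r^{\rm{N}},
\end{equation*}
and apply the mixed-product rule $(\mathbf{A}\otimes\mathbf{B})(\mathbf{C}\otimes\mathbf{D})=(\mathbf{A}\mathbf{C})\otimes(\mathbf{B}\mathbf{D})$ together with $(\mathbf{A}\otimes\mathbf{B})^T=\mathbf{A}^T\otimes\mathbf{B}^T$. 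Each of the four terms then becomes $\sum_{t,s}\rho_{r,t}\rho_{r,s}$ times a Kronecker product of a Toeplitz factor $\mathbf{I}_a^T\mathbf{I}_b$ (with $a,b\in\{(r,t),D\}$) and a small $T_x\times T_x$ block ($\mathbf{G}^T\mathbf{G}$, $\mathbf{G}^T\mathbf{R}_s$, $\mathbf{R}_t^T\mathbf{G}$, or $\mathbf{R}_t^T\mathbf{R}_s$). Since each $\mathbf{R}_t=r_t\mathbf{G}$ enters only through the small block, the randomness plays no role in the approximation itself, and the whole claim reduces to the deterministic statement $\mathbf{I}_a^T\mathbf{I}_b\approx\mathbf{E}$ for every pairing.

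For the LOS$\times$LOS pairing this is exactly the content delivered by the proof of Lemma \ref{lem:OriginalOrthogonalLOS} (namely $\mathbf{I}_{r,t}^T\mathbf{I}_{r,s}\approx\mathbf{E}$), so that term contributes $\tfrac{K}{K+1}\,\mathbf{E}\otimes\sum_{t,s}\rho_{r,t}\rho_{r,s}\,\mathbf{G}^T\mathbf{G}$. For the NLOS$\times$NLOS pairing I would examine the entries $(\mathbf{I}_D^T\mathbf{I}_D)_{ik}=\sum_m I_D[m]\,I_D[m+(k-i)]$: the diagonal entry is $1+\sum_{m\neq0}\tfrac{\omega_D^2}{2m^2}=1+\omega_D^2\zeta(2)\approx1$, while in each off-diagonal entry the two leading contributions $I_D[0]I_D[\ell]$ and $I_D[-\ell]I_D[0]$ cancel because $I_D[-\ell]=-I_D[\ell]$, leaving only $O(\omega_D^2)$ remainders. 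Hence $\mathbf{I}_D^T\mathbf{I}_D\approx\mathbf{E}$ in the small-$\omega_D$, large-$N$ regime.

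The main obstacle will be the two cross pairings $\mathbf{I}_{r,t}^T\mathbf{I}_D$ and its transpose, since these couple the deterministic LOS offset kernel $I_{r,t}[\cdot]$ of \eqref{equ:I_rk[n-k]} with the Doppler-spread kernel $I_D[\cdot]$, so the convenient single-kernel cancellation is no longer automatic. Here I would again split $(\mathbf{I}_{r,t}^T\mathbf{I}_D)_{ik}=\sum_m I_{r,t}[m]\,I_D[m+(k-i)]$ into its diagonal and off-diagonal parts: the diagonal is dominated by the $m=0$ term $I_{r,t}[0]I_D[0]\approx1$ (using $I_{r,t}[0]\approx\tfrac{\sin(\pi\varepsilon_{r,t})}{\pi\varepsilon_{r,t}}\to1$ for small $\varepsilon_{r,t}$), the remaining terms being products of two small quantities, while the off-diagonal sums must be shown to cancel to first order by the same alternating-sign argument used above and in Appendix \ref{Appen:LOSUnitary}. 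Once all four pairings are reduced to $\mathbf{E}$, collecting terms yields
\begin{equation*}
\mathbf{S}_r^T\mathbf{S}_r\approx\mathbf{E}\otimes\sum_{t,s}\rho_{r,t}\rho_{r,s}\Big(\sqrt{\tfrac{K}{K+1}}\mathbf{G}+\sqrt{\tfrac{1}{K+1}}\mathbf{R}_t\Big)^{\!T}\Big(\sqrt{\tfrac{K}{K+1}}\mathbf{G}+\sqrt{\tfrac{1}{K+1}}\mathbf{R}_s\Big)=\mathbf{E}\otimes(\bm{\zeta}^T\bm{\zeta}),
\end{equation*}
which is the claim. I expect the bookkeeping of the cross-term cancellations, rather than any single estimate, to be where the real care is required.
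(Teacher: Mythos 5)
Your proposal follows essentially the same route as the paper's Appendix~\ref{Appen:RicianUnitary}: expand $\mathbf{S}_r^T\mathbf{S}_r$ into the four pairings $\mathbf{\Lambda}_r^{\rm{L}},\mathbf{\Lambda}_r^{\rm{LN}},\mathbf{\Lambda}_r^{\rm{NL}},\mathbf{\Lambda}_r^{\rm{N}}$, show the diagonal blocks reduce to $\bm{\zeta}^T\bm{\zeta}$, and discard the off-diagonal blocks as $\omega_D\to 0$, the only cosmetic difference being that you organize the bookkeeping via the Kronecker mixed-product rule while the paper obtains \eqref{equ:ResDiagonalOfN}--\eqref{equ:ResNonDiagonalOfLN} by kernel substitution from the LOS case. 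One small correction: the off-diagonal cross blocks $\mathbf{I}_{r,t}^T\mathbf{I}_D$ do \emph{not} cancel to first order (the $u=0$ and $u=-\ell$ contributions are $\omega_D/(\sqrt{2}\ell)$ and $-\varepsilon_{r,t}/\ell$, which differ), but this does not matter --- they are still $O(\omega_D)$, which is all the lemma needs and all the paper itself uses.
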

\begin{proof}
The proof of this lemma is given in appendix \ref{Appen:RicianUnitary}. 
\end{proof}

Thus, the approximately frequency equalization can also be executed by replacing $\gamma_{\zeta} \mathbf{S}_r^{-1}$ with $\mathbf{S}_r^T / \gamma_{\zeta}$, where $\gamma_{\zeta}^2 = \rm{trace}(\bm{\beta}^T \bm{\beta})$ and $\gamma_{\zeta}$ is the channel gain of summed fading channels from $N_t$ transmit antennas to the $r$-th receive antenna. Hence, $\mathbf{S}_r^T / \gamma_{\zeta}$ is normalized and unitary. The frequency equalized signals $\hat{\mathbf{Y}}_r$ can be immediately expressed as

\begin{equation} \label{equ:FrequencyEqualization}
\begin{split}
\hat{\mathbf{Y}}_r &= \frac{1}{\gamma_{\zeta}} \mathbf{S}_r^T \mathbf{S}_r \mathbf{X} + \frac{1}{\gamma_{\zeta}} \mathbf{S}_r^T \mathbf{W}_r = \gamma_{\zeta} \mathbf{X} + \sqrt{\gamma_{\zeta}^2 / \rm{SIR}_k + 1}\mathbf{W}_{r}^{'},
\end{split}
\end{equation}
which avoids the complexity of calculating $\mathbf{S}_{r}^{-1}$ but involves the estimation of small-scale channel fading $\mathbf{R}_t$ compared with the ICI reduction method proposed in Section \ref{Sec:OrthogonalEqualizationForLOS}. ${\rm{SIR}}_k$ and $\mathbf{W}_{r}^{'}$ has the same definition with previous illustrated.

To analyze the ${\rm{SIR}}_k$ in \eqref{equ:FrequencyEqualization}, let $\mathbf{S}_r^T \mathbf{S}_r=\mathbf{\Lambda}_r$, $\left(\mathbf{S}_r^{\rm{N}}\right)^T \mathbf{S}_r^{\rm{N}}=\mathbf{\Lambda}_r^{\rm{N}}$, $\left(\mathbf{S}_r^{\rm{L}}\right)^T \mathbf{S}_r^{\rm{N}}=\mathbf{\Lambda}_r^{\rm{LN}}$ and $\left(\mathbf{S}_r^{\rm{N}}\right)^T \mathbf{S}_r^{\rm{L}}=\mathbf{\Lambda}_r^{\rm{NL}}$, which are partitioned matrices and the size of submatrices is $T_y \times T_x$. The diagonal and non-diagonal submatrices of $\mathbf{\Lambda}_r^{\rm{N}}$ is

\begin{equation} \label{equ:ResDiagonalOfN}
\begin{split}
\mathbf{\Lambda}_{r}^{\rm{N}}(k,k) = \left(1 + \frac{\pi^2 \omega_D^2}{6} \right)\sum_{i=1}^{L} \sum_{j=1}^{L} \rho_{r,i} \rho_{r,j} \mathbf{R}_i^T \mathbf{R}_j 
\end{split}
\end{equation}
and
\begin{equation} \label{equ:ResNonDiagonalOfN}
\begin{split}
\mathbf{\Lambda}_{r}^{\rm{N}}(k,k+m) = &\bigg\{\sum_{i=1}^{L} \sum_{j=1}^{L} \frac{\rho_{r,i} \rho_{r,j} \omega_D^2}{m^2}\mathbf{R}_{i}^T \mathbf{R}_{j} \bigg\}(-1)^m.
\end{split}
\end{equation}
Also, $\mathbf{\Lambda}_r^{\rm{LN}}$ can be expressed as \eqref{equ:ResDiagonalOfLN} and \eqref{equ:ResNonDiagonalOfLN}, which are shown at the top of next page. The expression of $\mathbf{\Lambda}_r^{\rm{NL}}$ is omitted, which can be easily obtained by $(\mathbf{\Lambda}_r^{\rm{LN}})^T$. Referencing the expression in \eqref{equ:SIRExpression}, the SIR at the $k$-th subcarrier can be immediately expressed as

\begin{figure*}[!t]
\normalsize

\begin{equation} \label{equ:ResDiagonalOfLN}
\mathbf{\Lambda}_{r}^{\rm{LN}}(k,k) = \sum_{i=1}^{N_t} \sum_{j=1}^{N_t} \frac{\rho_{r,i} \rho_{r,j} sin(\pi \varepsilon_{r,i})}{\pi \varepsilon_{r,i}} \mathbf{G}^T \mathbf{R}_{j} + \frac{\sqrt{2}\pi}{6}\sum_{i=1}^{N_t} \sum_{j=1}^{N_t} \rho_{r,i} \rho_{r,j} sin(\pi \varepsilon_{r,i}) \omega_D \mathbf{G}^T \mathbf{R}_{j}
\end{equation}

\begin{equation} \label{equ:ResNonDiagonalOfLN}
\begin{split}
\mathbf{\Lambda}_{r}^{\rm{LN}}(k,k+m) = \bigg\{ \sum_{i=1}^{N_t} \sum_{j=1}^{N_t} \frac{\rho_{r,i} \rho_{r,j} \omega_D}{-\sqrt{2}m}\mathbf{G}^T \mathbf{R}_{j} &+ \sum_{i=1}^{N_t} \sum_{j=1}^{N_t} \frac{\rho_{r,i} \rho_{r,j} \sin(\pi\varepsilon_{r,i})}{\pi(\varepsilon_{r,i} + m)}\mathbf{G}^T \mathbf{R}_{j}\\
&+ \sqrt{2} \sum_{i=1}^{N_t} \sum_{j=1}^{N_t} \frac{\rho_{r,i} \rho_{r,j} \varepsilon_{r,i} \omega_D}{m^2}\mathbf{G}^T \mathbf{R}_{j} \bigg\} (-1)^m
\end{split}
\end{equation}

\begin{equation}\label{equ:SIRApproxRicianMax}
\begin{split}
&{\rm{max(SIR}}_k{\rm{)}}\\
& =  \frac{||K \rho_{j+2}^2 \mathbf{G}^T \mathbf{G} + 2\sqrt{K}\rho_{j+2}^2\mathbf{G}^T \mathbf{R}_{j+2} + \rho_{j+2}^2 \mathbf{R}_{j+2}^T \mathbf{R}_{j+2}||_2^2}{2\zeta(4) ||6K \rho_{j+1}\rho_{j+2}  {\omega_D}^2 \cos^2(\theta_A) \mathbf{G}^T \mathbf{G} - 4\sqrt{K}\rho_{j+1}\rho_{j+2}\omega_D^2 \cos^2(\theta_A) \mathbf{G}^T \mathbf{R}_{j+2} + 2 \rho_{j+2}^2 \omega_D^2 \mathbf{R}_{j+2}^T \mathbf{R}_{j+2} ||_2^2}
\end{split}
\end{equation}

\begin{equation}\label{equ:SIRApproxRicianMin}
\begin{split}
{\rm{min(SIR}}_k{\rm{)}} = \frac{||4K\rho_{j+2}^2 \mathbf{G}^T \mathbf{G} + 2 \sqrt{K} \rho_{j+2}^2 \mathbf{G}^T(\mathbf{R}_{j+1} + \mathbf{R}_{j+2}) + \rho_{j+2}^2 (\mathbf{R}_{j+1}+\mathbf{R}_{j+2})^T (\mathbf{R}_{j+1}+\mathbf{R}_{j+2}) ||_2^2}{2\zeta(4) ||8K \rho_{j+2}^2 \omega_D^2 cos^2(\theta_B) \mathbf{G}^T \mathbf{G} + \rho_{j+2}^2 \omega_D^2 (\mathbf{R}_{j+1}+\mathbf{R}_{j+2})^T (\mathbf{R}_{j+1}+\mathbf{R}_{j+2})||_2^2}
\end{split}
\end{equation}
\hrulefill
\vspace*{4pt}
\end{figure*}

\begin{equation}\label{equ:SIRRicianExpression}
{\rm{SIR}}_k = \frac{|| \mathbf{\Lambda}_r(k,k) ||_2^2}{\sum_{i=1,\neq k}^{N} || \mathbf{\Lambda}_r(k,i) ||_{2}^{2}},
\end{equation}
where 
\begin{equation}
\begin{split}
\mathbf{\Lambda}_r(k,i) = \frac{K}{K+1} & \mathbf{\Lambda}_r^{\rm{L}}(k,i) + \frac{\sqrt{K}}{K+1}\mathbf{\Lambda}_r^{\rm{LN}}(k,i) + \frac{\sqrt{K}}{K+1}\mathbf{\Lambda}_r^{\rm{NL}}(k,i) + \frac{1}{K+1}\mathbf{\Lambda}_r^{\rm{N}}(k,i).
\end{split}
\end{equation}

The corresponding maximum and minimal $\rm{SIR}_k$ can be simplified as \eqref{equ:SIRApproxRicianMax} and \eqref{equ:SIRApproxRicianMin}, which is shown at the top of next page and is the SIR dynamic range at the $k$-th subcarrier as the train running alone the railway. The mean of \eqref{equ:SIRApproxRicianMax} and \eqref{equ:SIRApproxRicianMin} can also be expressed as \eqref{equ:SIRApprox}. The derivation of the mean $\rm{SIR}_k$ and \eqref{equ:ResDiagonalOfN}-\eqref{equ:SIRRicianExpression} are shown in appendix \ref{Appen:RicianUnitary}.

Based on previous theoretical results, the orthogonality based frequency equalization scheme can be summarized as Algorithm \ref{alg:originalAlgRician}.

\begin{algorithm}[htb] 

\caption{Frequency Equalization in Rician Scenarios} 
\label{alg:originalAlgRician}
\begin{algorithmic}

\REQUIRE ~~ 
\begin{enumerate}
\item The set of system parameters, \{$T_x,T_y, N$\};
\item Received signals, $\mathbf{Y}_r$;  
\item Map of large-scale fading, $\mathbf{M}_L$;
\item Training sequence on pilots, $T_s$;  
\item Large-scale fading infos, $\rho_{r,t}$ ;  
\item Velocity of trains, $v$.
\end{enumerate}

\ENSURE Frequency equalized signals, $\hat{\mathbf{Y}}_r$ 

\textbf{Initialization:} Search large-scale fading infos (i.e. $\rho_{r,t}$) from $\mathbf{M}_L$ and estimate the channel fading $\bigg\{ \sqrt{\frac{K}{K+1}} \mathbf{G}+ \sqrt{\frac{1}{K+1}}\mathbf{R}_t \bigg\}$ with $T_s$.

\textbf{While} the large- and small-scale channel fading remains stable \textbf{do}

\begin{enumerate}
\item Generate $\mathbf{I}_{r,t}$ with \eqref{equ:I_mat};

\item $\mathbf{S}_r$ $\leftarrow$ $\mathbf{S}_{r}^{L} + \mathbf{S}_{r}^{N}$, where $\mathbf{S}_{r}^{L}$ and $\mathbf{S}_{r}^{N}$ are generated from \eqref{equ:ModifiedExpressionMatFormFlatFading};

\item $\gamma_{\zeta}$ $\leftarrow$ ${\rm{trace}} \left\{\sum_{t=1}^{N_t} \rho_{r,t} \left( \sqrt{\frac{K}{K+1}} \mathbf{G}+ \sqrt{\frac{1}{K+1}}\mathbf{R}_t \right) \right\}$; 

\item $\hat{\mathbf{Y}}_r$ $\leftarrow$ $\frac{1}{\gamma_{\zeta}} \mathbf{S}_r^T \mathbf{Y}_r$;

\end{enumerate}

\textbf{Return:} $\hat{\mathbf{Y}}_r$

\end{algorithmic}
\end{algorithm}

\section{Evaluation of Proposed ICI Reduction Methods}\label{sec:EffectivenessValuation}

In this section, let us consider the situation when the power of NLOS paths can not to be neglected. The effectiveness of proposed ICI reduction is considered respect to varying Rician $K$ factors. Later, we shall analyze the impact of remaining SIR after ICI reduction from the perspective of information theory and evaluate how mobility affects the ASQ of proposed ICI reduction method. ASQ \cite{dong2012deterministic} is the accumulate channel service capacity in one period. In addition, the approximate calculation of ASQ is proposed, which gives more specific instructions for the design of HSR wireless communication systems. Finally, the computation complexity and bandwidth cost of proposed ICI reduction methods is analyzed.

\subsection{Effectiveness with Respect to K Factor}

In proposed ICI reduction methods, the frequency equalization is carried out by applying $(\mathbf{S}_r^L)^T / \gamma_{\beta}$ or $\mathbf{S}_r^T / \gamma_{\zeta}$ to both sides of \eqref{equ:ModifiedExpressionMatFormFlatFading} in AWGN or Rician scenarios, respectively. Thus, following two questions should be answered: 1) When the propagation channels can be considered as AWGN channels. 2) How the NLOS paths affects ICI reduction results. That is, the suitable situation of proposed ICI reduction methods and the effects of varying K factor on AWGN or Rician scenarios need to be analyzed. For the concise of paper, the theoretical results are summarized in following lemma.

\begin{lem}\label{lem:VarianceRespectToRiceFactor}
In ICI reduction method aiming at AWGN scenarios, the expectation and variance of maximum $\rm{SIR}_k$ with respect to Rician K-factor can be expressed as

\begin{equation}\label{equ:AveAndVarOfSimMaxSIR}
\begin{cases}
\mathbb{E}\left({\rm{max}} \left(\rm{SIR}_k\right)\right) &= \frac{\psi K}{72\omega_D^4\cos^4(\theta_A)\zeta(4) K + \psi \zeta(2)\omega_D^2}\\
\mathrm{Var}\left({\rm{max}} \left(\rm{SIR}_k\right)\right) &= 8 \log_{10}^2\left(1 + \frac{\psi \zeta(2)}{29\omega_D^2 \cos^4(\theta_A) \zeta(4)K} \right) {\rm{dB}}^2.\\
\end{cases}
\end{equation}
In ICI reduction method aiming at Rician scenarios, the variance of $\rm{SIR}_k$ with respect to Rician K-factor can be expressed as

\begin{equation}\label{equ:VarOfOriginalSIR}
\begin{cases}
\mathrm{Var}\left({\rm{max}} \left(\rm{SIR}_k\right)\right) &= \frac{400}{\ln^2(10) K} ~{\rm{dB}}^2\\
\mathrm{Var}\left({\rm{min}} \left(\rm{SIR}_k\right)\right) &= \frac{200}{\ln^2(10) K} ~{\rm{dB}}^2.\\
\end{cases}
\end{equation}

\end{lem}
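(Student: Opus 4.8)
The plan is to treat both parts as moment computations over the single source of randomness in the model, namely the NLOS amplitudes entering through $\mathbf{R}_t = r_t\mathbf{G}$ with $r_t\sim\mathcal{CN}(0,1)$, and to regard the resulting moments as functions of the Rician factor $K$. For the AWGN-targeted method I start from the closed-form extremal SIR in \eqref{equ:SIRApprox}, and for the Rician-targeted method from \eqref{equ:SIRApproxRicianMax}--\eqref{equ:SIRApproxRicianMin}; in each case $\mathrm{SIR}_k$ is a ratio of squared Frobenius norms of the diagonal and off-diagonal submatrices assembled from the blocks $\mathbf{\Lambda}_r^{\mathrm{L}}, \mathbf{\Lambda}_r^{\mathrm{LN}}, \mathbf{\Lambda}_r^{\mathrm{NL}}, \mathbf{\Lambda}_r^{\mathrm{N}}$ of Section \ref{Sec:ICIReduction}, weighted by $K/(K+1)$, $\sqrt{K}/(K+1)$ and $1/(K+1)$ respectively. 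Throughout I use $\mathbb{E}[r_t]=0$ and $\mathbb{E}[|r_t|^2]=1$, so that every cross term $\mathbf{G}^T\mathbf{R}_t$ is zero-mean and every self term $\mathbf{R}_i^T\mathbf{R}_j$ has a deterministic mean.

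For the first identity of \eqref{equ:AveAndVarOfSimMaxSIR} I compute the extremal SIR of the LOS equalizer on a Rician channel, where, because $(\mathbf{S}_r^{\mathrm{L}})^T$ cannot exploit the scattered paths, the entire NLOS contribution is residual interference. The desired signal is then the deterministic LOS diagonal block, whose power carries the $72\,\omega_D^4\cos^4(\theta_A)\zeta(4)$ structure of \eqref{equ:SIRApprox} and scales relative to the LOS ICI as $O(K)$, whereas the NLOS interference is $O(1)$: it is obtained by inserting the per-subcarrier Doppler variance $\omega_D^2/[2(K+1)(n-k)^2]$ of Lemma \ref{lem:DopplerSpreadVariance} and summing over $n\neq k$, where $\sum_{m\ge1} m^{-2}=\zeta(2)$ generates the $\psi\,\zeta(2)\,\omega_D^2$ term. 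Taking expectations replaces $|r_t|^2$ by its mean and, after approximating the mean of the ratio by the ratio of the means, yields the stated $\mathbb{E}(\max(\mathrm{SIR}_k))$. Because the numerator is deterministic here and only the NLOS interference fluctuates, at relative order $O(1/K)$ in power, passing to decibels and writing the interference as $B(1+X/B)$ makes the dB deviation proportional to $\log_{10}(1+X/B)$, from which the $\log_{10}^2(1+c/K)$ shape of the variance in \eqref{equ:AveAndVarOfSimMaxSIR} and its rapid ($\sim 1/K^2$) decay follow once the constants are collected.

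For \eqref{equ:VarOfOriginalSIR} I work directly with \eqref{equ:SIRApproxRicianMax} and \eqref{equ:SIRApproxRicianMin}. Here both numerator and denominator have the form $\|K a + \sqrt{K}\,b + c\|_2^2$ with $a$ deterministic (the $\mathbf{G}^T\mathbf{G}$ blocks) and $b,c$ linear and quadratic in the $r_t$. Expanding the squared norm and keeping leading orders in $K$, the dominant term is $K^2\|a\|_2^2$ and the fluctuation is the cross term of order $K^{3/2}$, so the relative fluctuation of each of the numerator and the denominator is $O(1/\sqrt{K})$. Converting to decibels and applying the delta method, $\mathrm{SIR}_k$ in dB is, to first order, a deterministic constant plus a real linear combination of $\mathrm{Re}(r_t),\mathrm{Im}(r_t)$; its variance is then $(10/\ln 10)^2$ times the variance of that linear form, which is $O(1/K)$. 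Carrying out the block-norm bookkeeping gives $400/[\ln^2(10)K]$ for the maximum and $200/[\ln^2(10)K]$ for the minimum, the factor of two tracing to the appearance of the independent sum $\mathbf{R}_{j+1}+\mathbf{R}_{j+2}$ (whose variance is additive) in \eqref{equ:SIRApproxRicianMin} versus the single $\mathbf{R}_{j+2}$ in the dominant blocks of \eqref{equ:SIRApproxRicianMax}.

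The main obstacle is that each SIR is a ratio of correlated quadratic forms in the Gaussians $r_t$, whose exact moments are not available in closed form. The proof therefore rests on a controlled linearization: one must verify that over the relevant range of $K$ the NLOS-induced fluctuations are small enough that the ratio-of-means and first-order dB (log-normal) approximations are accurate, and then carry out the somewhat delicate Frobenius-norm accounting over the Kronecker and partitioned block structure of $\mathbf{S}_r^{\mathrm{L}},\mathbf{S}_r^{\mathrm{N}}$ to pin down the numerical constants $8$, $29$, $400$ and $200$. Matching these constants, rather than the scaling laws, is where the real effort lies.
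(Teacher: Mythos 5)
Your proposal follows essentially the same route as the paper's Appendix on this lemma: isolate $r_t\sim\mathcal{CN}(0,1)$ as the sole randomness, expand the extremal ISR/SIR for large $K$ so the random parts of the denominators drop out, sum the $m^{-2}$ and $m^{-4}$ series via $\zeta(2)$ and $\zeta(4)$, and linearize the dB-scale SIR in $r_{j+2}^2$ (AWGN case, using $\mathrm{Var}(r^2)=1$) or in $r_{j+2}$ and $r_{j+1}+r_{j+2}$ (Rician case) to read off the variances. The only loose point is attributing the $400$-versus-$200$ gap solely to the additive variance of $\mathbf{R}_{j+1}+\mathbf{R}_{j+2}$ — in the paper it is the net of a halved linear coefficient ($20/\sqrt{K}\ln 10$ versus $10/\sqrt{K}\ln 10$, a factor $1/4$ in variance) against the doubled Gaussian variance — but you correctly defer that to the bookkeeping.
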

\begin{proof}
Proof of this lemma is given in appendix \ref{Appen:Variance}. 
\end{proof}

It can be observed from Lemma \ref{lem:VarianceRespectToRiceFactor} that, in ICI reduction method aiming at AWGN scenarios, the expectation of maximum ${\rm{SIR}_k}$ increases with respect to Rician K-factor. Meanwhile, the variance of ${\rm{SIR}_k}$ decreases with respect to K. Thus, in scenarios with properly high K-factor, the effects of NLOS parts can be neglected. Using the variance to expectation ratio (VER), we define the proper K-factor as follows.

\begin{defn}\label{def:ProperlyK}
Define the Rician K-factor $K_p$ satisfies that $\frac{\mathrm{Var}\left({\rm{max}} \left(\rm{SIR}_k\right)\right)}{\mathbb{E}\left({\rm{max}} \left(\rm{SIR}_k\right)\right)} = \chi$, where $\chi$ is the preset threshold (i.e. 10\%).
\end{defn}


For channels with $K \geq K_p$, adopting Alg. \ref{alg:ICIReductionAlgLOS} can acquire less VER than $\chi$. In these scenarios, propagation channels can be considered as AWGN channels. On the contrary, for channels with $K \leq K_p$, Alg. \ref{alg:originalAlgRician} should be adopted to provide better ICI reduction performance, including higher expectation and less VER of remaining ${\rm{SIR}_k}$. In these scenarios, propagation channels should be considered as Rician channels. The proof can be easily derived with the monotony of $\mathbb{E}\left({\rm{max}} \left(\rm{SIR}_k\right)\right)$ and $\mathrm{Var}\left({\rm{max}} \left(\rm{SIR}_k\right)\right)$.

\subsection{Accumulate Service Quantity}

Without loss of generality, let A as the original, the position $x$ from $A \rightarrow C$ varies from $0$ to $d_h$, which is the distance between adjacent RRUs. In addition, denote the corresponding signal to interference plus noise ratio at the $k$-th subcarrier ($\rm{SINR}_k$) as ${\rm{SINR}_k}(x)$. Hence, the ASQ $\mathcal{L}(\omega_{D},x)$ of $k$-th subcarrier can be represented as\cite{dong2012deterministic}

\begin{equation}\label{equ:ApproxMobileService}
\begin{split}
\mathcal{L}(\omega_{D},&\psi) =  \int_0^{x/v} \log_2 \left(1 + \frac{\gamma^2}{ \gamma^2 / {\rm{SINR}_k}(\tau v) + 1}\right) d\tau\\
& \thickapprox \frac{x}{2v} \bigg \{ \log_2 \left(1 + \frac{\gamma^2}{\gamma^2 / \mathbb{E}\big(\rm{max}(\rm{SINR}_k)\big) + 1}\right) + \log_2 \left(1 + \frac{\gamma^2}{\gamma^2/ \mathbb{E}\big(\rm{min}(\rm{SINR}_k)\big) + 1}\right) \bigg \}.
\end{split}
\end{equation}
The approximation of \eqref{equ:ApproxMobileService} is by using the the linear approximation of ${\rm{SINR}_k}(x)$ with respect to $x$, which will be verified by simulations in next Section. If $x=\psi$, the ASQ is the mobile service (MS) and denotes the accumulated channel capacity in one period. Observing \eqref{equ:SIRApprox} and \eqref{equ:ApproxMobileService}, one can find that as the mobility increases, $\omega_{D}$ increases as well and hence, reduces the channel mobile service $\mathcal{L}(\omega_{D},\psi)$. On the other hand, one can conclude that the MS at one deterministic subcarrier is only related to the velocity when the K-factor as well as the distributions and transmit power of RRUs are determined.

\subsection{Computational Complexity and Bandwidth Cost}

In AWGN and Rician scenarios, Alg. \ref{alg:ICIReductionAlgLOS} and Alg. \ref{alg:originalAlgRician} is adopted, respectively. The difference of two proposed ICI reduction method is whether or not estimating the small-scale channel fading. In literature, there are amount of channel estimation methods. For instance, in \cite{van1995channel,li2002simplified,coleri2002channel,kim2005qrd,simon2013iterative}, minimum mean-square error (MMSE), least-squares (LS) and Kalman-based channel estimators are proposed, which cost different computation complexity is not the scope of this paper. Thus, we ignore the channel estimation complexity and only consider the complexity on ICI reduction.

In Alg. \ref{alg:ICIReductionAlgLOS} and \ref{alg:originalAlgRician}, the complexity of generating ${\rm{\mathbf{S}}}_r^L$ and ${\rm{\mathbf{S}}}_r^N$ is ${\rm{O}}(T_x T_y L N^2)$. In addition, the approximate frequency equalization is carried out by simply multiplexing the transpose of $\rm{\mathbf{S}}_r^L$ or $\rm{\mathbf{S}}_r$, which cost ${\rm{O}}(N^2)$. Thus, the overall ICI reduction computational complexity is ${\rm{O}}(T_x T_y L N^2)$ and no additional bandwidth cost is needed.



\section{Numerical Results}\label{Sec:NumRes}

In this section, numerical results are presented to show the validity of our theoretical analysis and provide more insights on the effectiveness of proposed ICI reduction algorithm. By absorbing the transmit power into large-scale path loss, which is modeled via Okumura-Hata model\cite{medeisis2000use}, the received signal power can be normalized (i.e. $\sigma_x^2 = 1$) and the channel gains can be represented as $G(d) = B - 10\alpha {\rm{log}}(d)$, where $\alpha$ is related to the height of receive antennas and defined as 3.8. The propagation distance is denoted by $d$. The parameter $B$ controls the receive signal power and has been set as $126$. $d_h$ and $d_v$ in Fig. \ref{Fig:CoverageModel} are defined as 500m and 100m, respectively. Referencing the simulation parameters in \cite{lu2014precoding}, the OFDM sampling duration $T_s$ is 71$\mu s$ and 2.4GHz, respectively. The total subcarrier number $K=1024$.

\subsection{SIR and ASQ Performance versus $\omega_D$}\label{Sec:SimuSIRAndASQ}

We summary the maximum and minimum SIRs versus different $\omega_D$ in table \ref{tab:SIRAndMSs}. To illustrate the overall SIR performance of proposed method, we also give the theoretical SIRs calculated by \eqref{equ:SIRApprox} and the simulated SIRs with and without ICI reduction at specific $\omega_D = 0.08$ in Fig. \ref{Fig:VariousSIR}, that is, the velocity is 500 km/h. The black vertical lines denote the positions of RRUs. The propagation channels are assumed to be LOS (i.e. K $\rightarrow +\infty$). It can be seen that the calculated theoretical maximum and minimum SIRs with \eqref{equ:SIRApprox} are 51.75dB and 35.01dB, corresponding to simulated SIR 51.61dB and 35.22dB, respectively. This confirmed that our theoretical upper- and lower-bounds are tight and with high accuracy to its real values. In table \ref{tab:SIRAndMSs}, the minimum SIRs with proposed ICI reduction method are significantly higher than that without ICI reduction. In addition, at most of the positions in Fig. \ref{Fig:SIRsInLOSScenarios}, the resulted SIR via proposed ICI reduction method is observably higher than the SIR without ICI reduction. That is, for such a interference limited OFDM system, one can conclude that proposed ICI reduction method can effectively increase system capacity at most positions on the railway and provide promising mobile service enhancement.

\begin{figure}[htbp]
\centering
\includegraphics[width=0.65\textwidth]{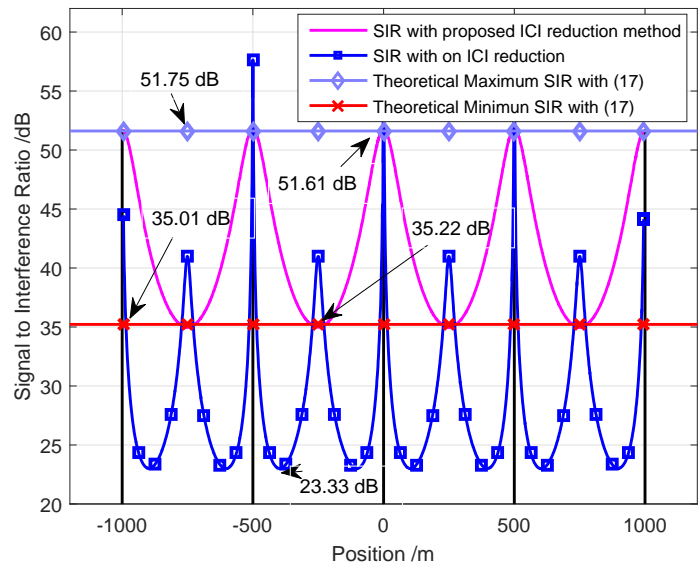}
\caption{The experimental SIRs Versus different Positions on 500km/h without scattering/reflecting and MIMO regime is $4 \times 4$. The x-label represents the position of receive antenna and the y-label represents corresponding SIRs with and without ICI reductions. The horizontal lines are theoretical maximum and minimal SIRs via \eqref{equ:SIRApprox}.}\label{Fig:SIRsInLOSScenarios}
\end{figure}

Also, the simulated and theoretical MSs versus different $\omega_D$ are summarized in table \ref{tab:SIRAndMSs}. In addition, for specific $\omega_D=0.05$ and 0.08, the theoretical ASQ with \eqref{equ:ApproxMobileService} and the ASQ with and without ICI reduction are depicted in Fig. \ref{Fig:MobileService}, corresponding to 300km/h and 500km/h, respectively. It can be observed that the approximation in \eqref{equ:ApproxMobileService} at discussed velocities is tight and the ASQ can be precisely predicted with \eqref{equ:ApproxMobileService}. Observe the ASQ without ICI reduction in table \ref{tab:SIRAndMSs} and Fig. \ref{Fig:MobileService}, one can find that the ICI increases significantly with respect to mobility and results in a sharp loss of ASQ. By contrast, our proposed ICI mitigation method can get almost the same performance with the ICI completely removed mode in terms of the mobile service in HSR scenarios with $\omega_D=0.05$. For HSR scenarios with $\omega_D=0.08 \sim 0.2$, the MS loss with ICI reduction varies from $13\%$ to $34\%$ compared with ICI completely removed. However, the MS loss without ICI reduction varies from $39\%$ to $55\%$. That is, proposed method can efficiently retain the MS of system, especially for the HSR scenarios with $\omega_D=0.05 \sim 0.08$, corresponding to $300 \rm{km/h} \sim 500 \rm{km/h}$.

\begin{figure}[htbp]
\centering
\includegraphics[width=0.65\textwidth]{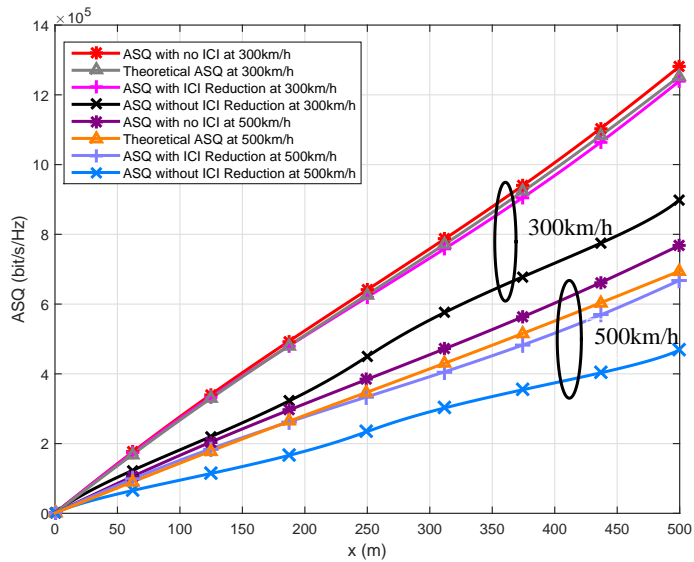}
\caption{The  accumulative service quantity (ASQ) over one subcarrier, where the velocities are 300km/h and 500km/h, corresponding to $\omega_D=0.05$ and 0.08, respectively.} \label{Fig:MobileService}
\end{figure}

\begin{table}[htbp]
 \caption{SIRs (\textnormal{shown in dB}) and MSs (\textnormal{bit/s/Hz, $\times 10^5$}) versus $\omega_D$}\label{tab:SIRAndMSs}
 \centering
 \begin{tabular}{cccccc}
  \toprule
  $\omega_D$ & 0.05 & 0.08  & 0.12  & 0.15 & 0.2 \\  
  \midrule
  Max reduced SIR & 60.6 & 51.73 & 45.96 & 41.63 & 38.2\\
  Min reduced SIR & 43.82 & 34.97 & 29.21 & 24.91 & 21.54\\
  Min SIR without reduction & 27.41 & 22.98 & 20.06 & 17.87 & 16.11\\
  \midrule
  MS with no ICI & 12.82 & 7.68 & 5.49 & 4.26 & 3.49\\
  MS with ICI reduction & 12.4 & 6.68 & 4.2 & 2.89 & 2.11\\
  Theoretical MS with \eqref{equ:ApproxMobileService} & 12.54 & 6.95 & 4.46 & 3.11 & 2.3\\
  MS without ICI reduction & 9.0 & 4.69 & 2.98 & 2.10 & 1.57\\
  \bottomrule
 \end{tabular}
\end{table}

\subsection{Effects of Antenna number and Rician K-factor}\label{Sec:SimuKAffects}

The effects of antenna numbers are summarized in table \ref{tab:AntennaNumbers}. It can be seen that the antenna numbers don't affect the expectation and variance of Alg. \ref{alg:ICIReductionAlgLOS} and Alg. \ref{alg:originalAlgRician}. This is because the signal and interference power in \eqref{equ:SIRExpression} and \eqref{equ:SIRRicianExpression} grows linearly with $T_xT_y$ simultaneously. Therefore, as the antenna number grows, the expectation and variance performance of SIR remains unchanged.

\begin{table}[htbp]
 \caption{Expectation and Variance versus Antenna Regimes\protect\\(\textnormal{shown in dB, iteration times = 1000})}\label{tab:AntennaNumbers}
 \centering
 \begin{tabular}{cccccc}
  \toprule
  \multicolumn{2}{c}{Antenna Regime} & $1 \times 1$ & $2 \times 2$  & $4 \times 4$  &  $8 \times 8$  \\  
  \midrule
  \multirow{2}{*}{Alg. \ref{alg:ICIReductionAlgLOS}} & $\mathbb{E}{\rm{(max(SIR_k))}}$ & 48.76 & 48.72 & 48.87 & 48.80\\
    & $\mathrm{Var}{\rm{(max(SIR_k))}}$ & 3.90 & 4.46 & 3.78 & 4.23 \\  
  \midrule
  \multirow{4}{*}{Alg. \ref{alg:originalAlgRician}} & $\mathbb{E}{\rm{(max(SIR_k))}}$ & 48.76 & 48.72 & 48.87 & 48.80\\
    & $\mathrm{Var}{\rm{(max(SIR_k))}}$ & 3.90 & 4.46 & 3.78 & 4.23 \\
    \cline{2-6}& $\mathbb{E}{\rm{(min(SIR_k))}}$ & 48.76 & 48.72 & 48.87 & 48.80\\
    & $\mathrm{Var}{\rm{(min(SIR_k))}}$ & 3.90 & 4.46 & 3.78 & 4.23 \\
  \bottomrule
 \end{tabular}
\end{table}

Fig. \ref{Fig:Simp20To40dB} and \ref{Fig:Orig10To30dB} demonstrate the SIR and variance performance of Alg. \ref{alg:ICIReductionAlgLOS} and Alg. \ref{alg:originalAlgRician} versus Rician factor. Besides, the theoretical and simulated expectations and variances are summarized in table \ref{tab:ExpAndVarVersusRicianKFactors}. The simulation parameter $\omega_D$ and antenna regime are 0.08 and $4 \times 4$, respectively. That is, the simulation results are shown in scenarios with velocity of 500km/h. Observe and compare the ICI reduction performance of Alg. \ref{alg:ICIReductionAlgLOS} and \ref{alg:originalAlgRician}, one can conclude that
\begin{enumerate}
\item In Alg. \ref{alg:ICIReductionAlgLOS}, the expectation and variance of maximum SIR increases and decreases with K, respectively. By contrast, even though the variances of maximum and minimum SIRs in Alg. \ref{alg:originalAlgRician} increase with K, the expectations remain stable.
\item Any channels with $K \geq 30\rm{dB}$ can be considered as AWGN channels. (In the scenarios with $K=30\rm{dB}$, which is marked in table \ref{tab:ExpAndVarVersusRicianKFactors}, ${\mathrm{Var}(\rm{SIR}_k)}/{\mathbb{E}(\rm{SIR}_k)} \approx 10\%$. According to definition \ref{def:ProperlyK}, one can claim that $K_p = 30\rm{dB}$.)
\item For arbitrary K, the variances of remained SIR in Alg. \ref{alg:originalAlgRician} are less than that in Alg. \ref{alg:ICIReductionAlgLOS}. When $K \leq 30\rm{dB}$, Alg. \ref{alg:originalAlgRician} can provide better ICI reduction performance. However, for $K \geq 30\rm{dB}$, adopt Alg. \ref{alg:ICIReductionAlgLOS} can provide satisfying performance.
\end{enumerate}

Finally, as shown in table \ref{tab:ExpAndVarVersusRicianKFactors}, except for scenarios with low Rician factors (i.e. 10dB), the theoretical expectations and variances are with high accuracy to the simulated ones. This is because the linear approximations in \eqref{equ:VarAppenISRSimMax}, \eqref{equ:VarAppenISROrigMax} and \eqref{equ:VarAppenISROrigMin} are with the assumption that K is sufficiently large. Notice the fact that, in most of the HSR scenarios, the scattering/reflecting is poor \cite{dong2015power,li2013channel}. The accuracy of theoretically approximated expectations and variances in \eqref{equ:SIRApprox} and lemma \ref{lem:VarianceRespectToRiceFactor} is sufficiently enough for HSR scenarios.

\begin{figure*}
\centering 
\subfigure[20dB] { \label{Fig:Simp20dB}     
\includegraphics[width=0.3\columnwidth]{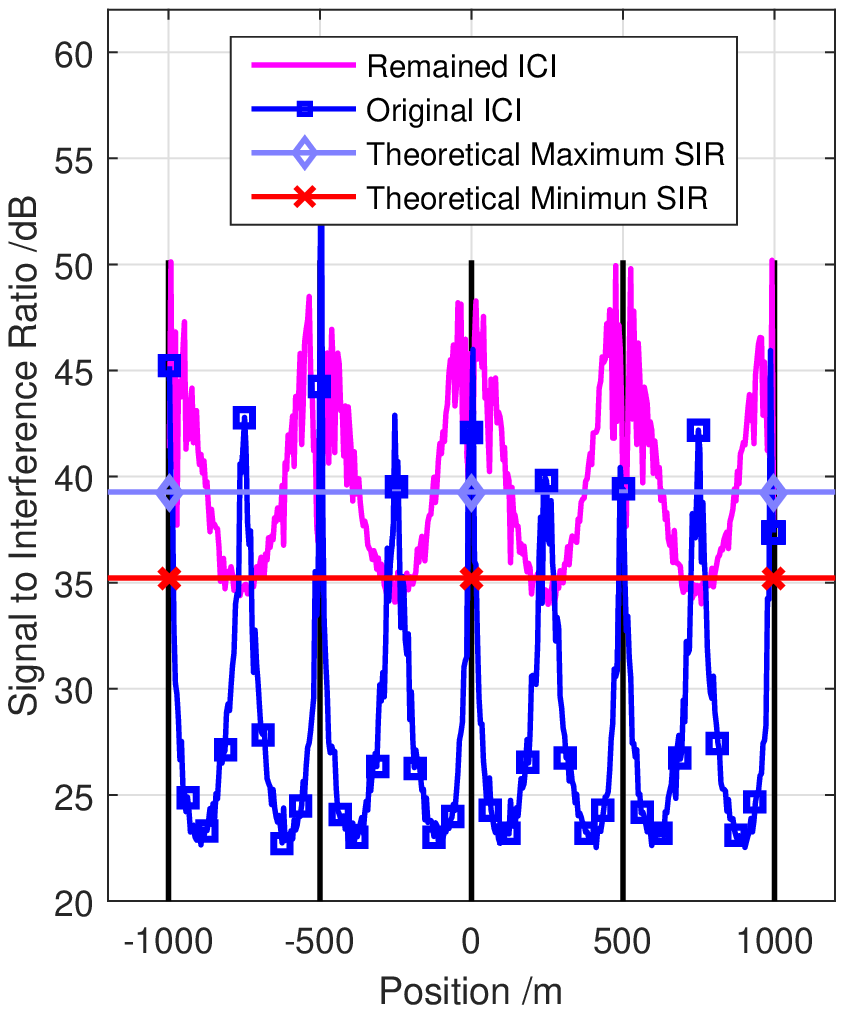}  
}
\subfigure[30dB] { \label{Fig:Simp30dB}     
\includegraphics[width=0.3\columnwidth]{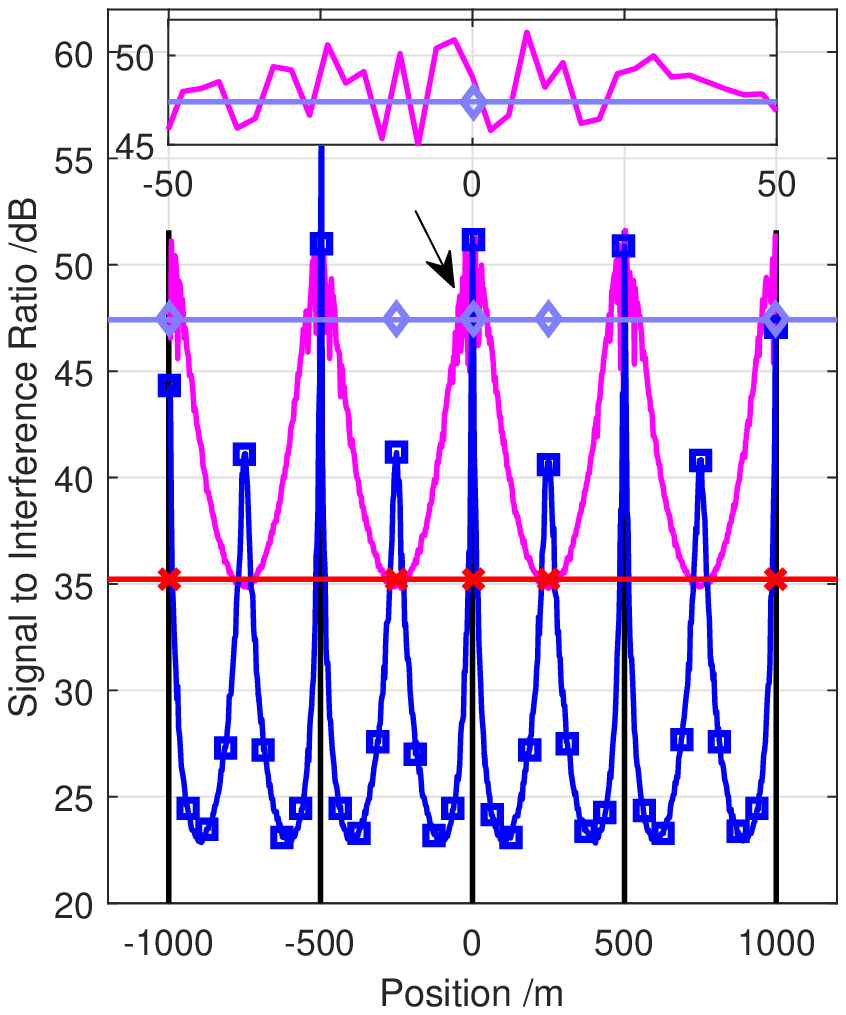}     
}
\subfigure[40dB] { \label{Fig:Simp40dB}     
\includegraphics[width=0.3\columnwidth]{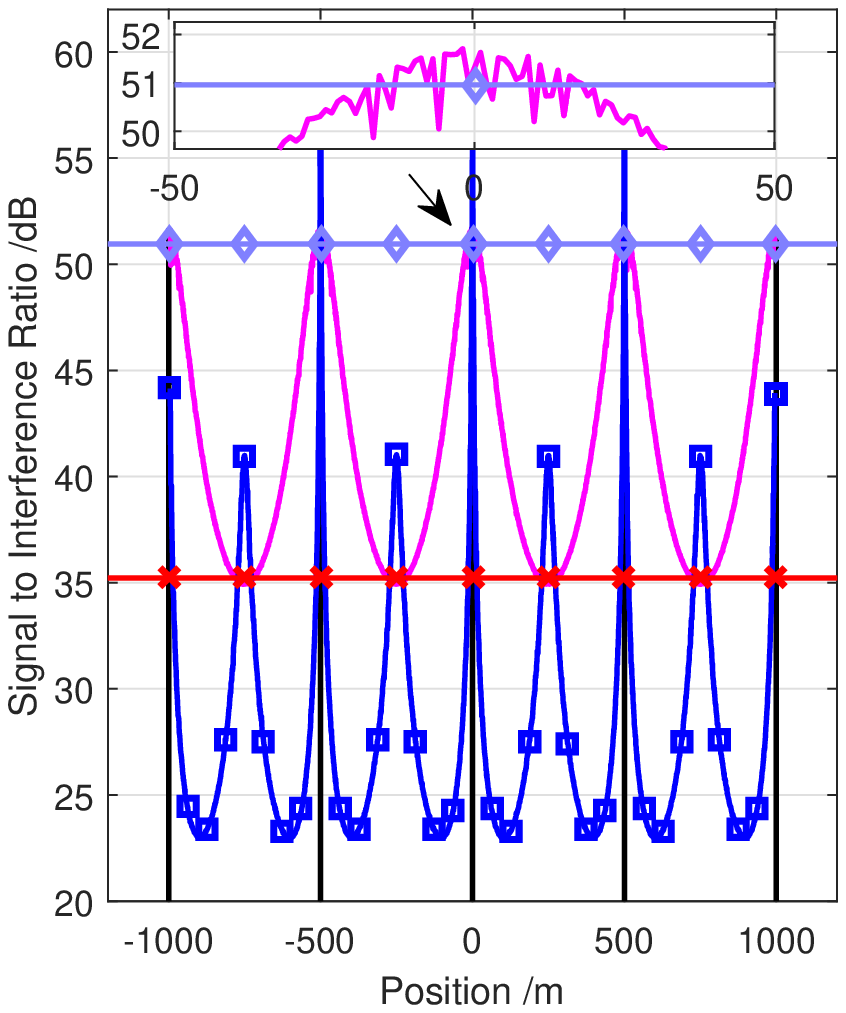}     
}
\caption{The ICI reduction effects with algorithm \ref{alg:ICIReductionAlgLOS}. The experimental SIRs versus different Positions on $\omega_D=0.08$ are shown different Rician factors, i.e. K = 20dB, 30dB and 40dB, respectively. The MIMO regime is 4$\times$4 and corresponding velocity is 500km/h.} \label{Fig:Simp20To40dB}
\end{figure*}

\begin{figure*}
\centering 
\subfigure[10dB] { \label{Fig:Orig10dB}     
\includegraphics[width=0.3\columnwidth]{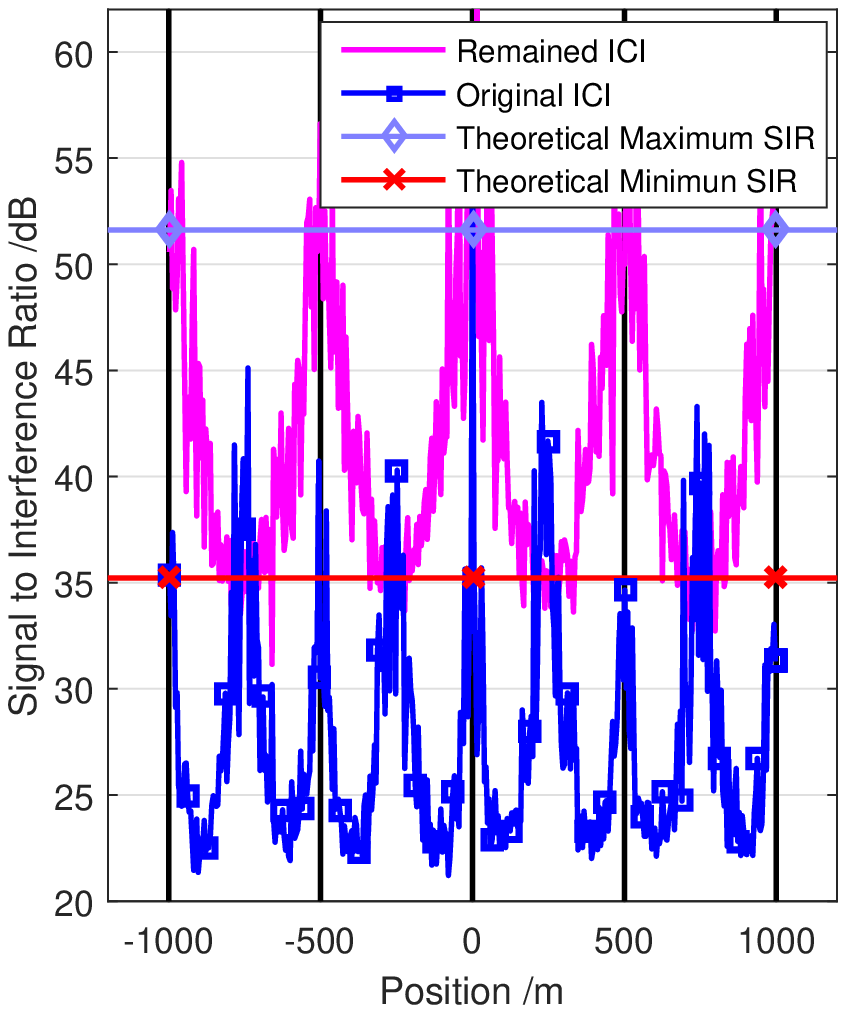}  
}    
\subfigure[20dB] { \label{Fig:Orig20dB}     
\includegraphics[width=0.3\columnwidth]{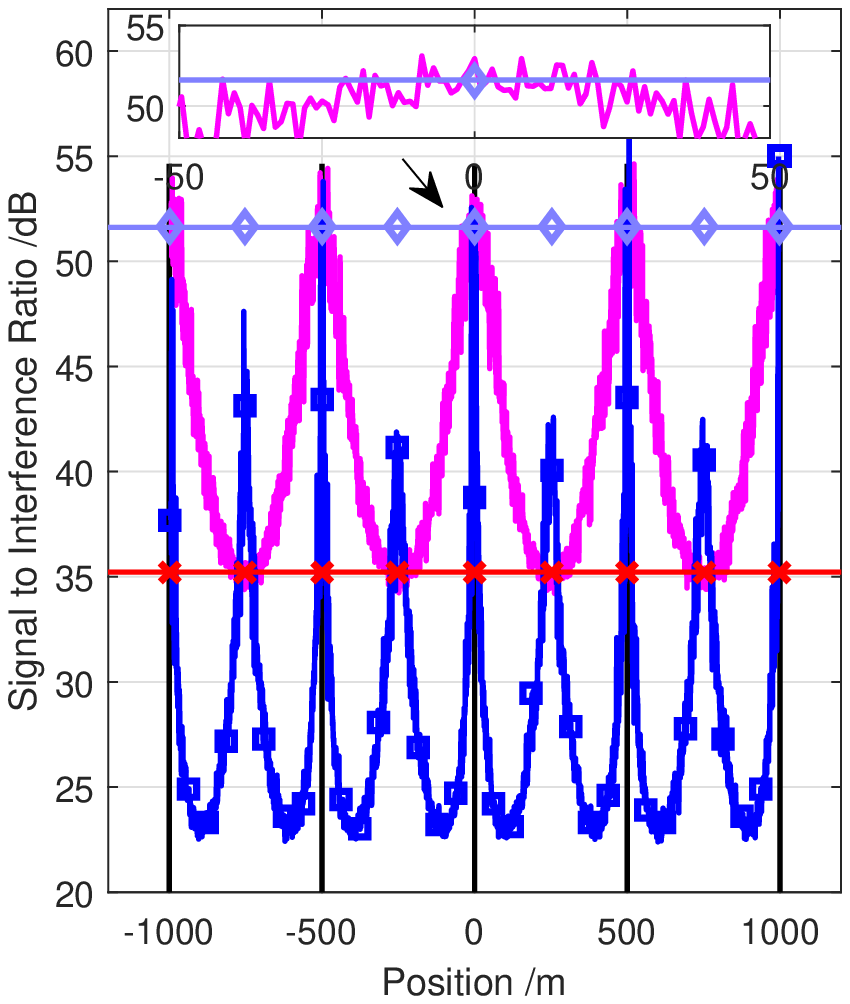}     
}    
\subfigure[30dB] { \label{Fig:Orig30dB}     
\includegraphics[width=0.3\columnwidth]{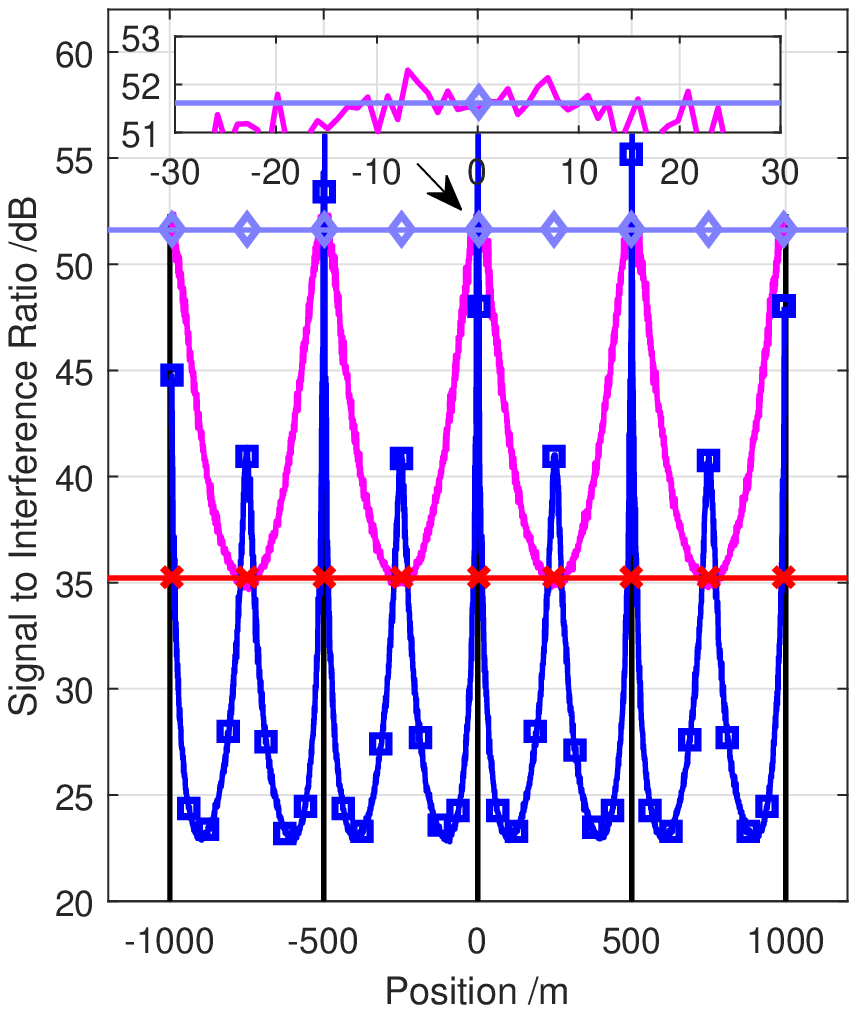}     
}      
\caption{The ICI reduction effects with algorithm \ref{alg:originalAlgRician}. The experimental SIRs versus different Positions on $\omega_D=0.08$ are shown different Rician factors, i.e. K = 10dB, 20dB and 30dB, respectively. The MIMO regime is 4$\times$4 and corresponding velocity is 500km/h.} \label{Fig:Orig10To30dB}    
\end{figure*}

\begin{table}[htbp]
 \caption{Expectation and Variance in Theory \textnormal{(Th)} and Simulation \textnormal{(Sim)} versus Rician factors \protect\\(\textnormal{shown in dB, iteration times = 1000})}\label{tab:ExpAndVarVersusRicianKFactors}
 \centering
 \begin{tabular}{ccccccc}
  \toprule
  \multicolumn{3}{c}{Rician Factors} & 10 & 20  & \multicolumn{1}{>{\columncolor{mygray}}c}{30}  &  40 \\ 
  \midrule
  \multirow{4}{*}{Alg. \ref{alg:ICIReductionAlgLOS}} & \multirow{2}{*}{$\mathbb{E}{\rm{(max(SIR_k))}}$} & Sim & - & 42.28 & \multicolumn{1}{>{\columncolor{mygray}}c}{48.62} & 51.25\\
  & & Th & - & 39.28 & \multicolumn{1}{>{\columncolor{mygray}}c}{47.42} & 50.96\\
  & \multirow{2}{*}{$\mathrm{Var}{\rm{(max(SIR_k))}}$} & Sim & - & 17.92 & \multicolumn{1}{>{\columncolor{mygray}}c}{4.16} & 0.19 \\  
  & & Th & - & 21.01 & \multicolumn{1}{>{\columncolor{mygray}}c}{3.98} & 0.18\\
  \midrule  
  \multirow{8}{*}{Alg. \ref{alg:originalAlgRician}} & \multirow{2}{*}{$\mathbb{E}{\rm{(max(SIR_k))}}$} & Sim & 51.68 & 51.72 & 51.73 & -\\
  & & Th & 51.61 & 51.61 & 51.61 & -\\
  & \multirow{2}{*}{$\mathrm{Var}{\rm{(max(SIR_k))}}$} & Sim & 8.19 & 0.33 & 0.03 & - \\  
  & & Th & 3.77 & 0.38 & 0.04 & -\\
  \cline{2-7}& \multirow{2}{*}{$\mathbb{E}{\rm{(min(SIR_k))}}$} & Sim & 35.0 & 35.0 & 35.0 & -\\
  & & Th & 35.22 & 35.22 & 35.22 & -\\
  & \multirow{2}{*}{$\mathrm{Var}{\rm{(min(SIR_k))}}$} & Sim & 1.20 & 0.12 & 0.01 & - \\
  & & Th & 1.89 & 0.19 & 0.02 & -\\
  \bottomrule
 \end{tabular}
\end{table}

\section{Conclusion}\label{Sec:Conclusion}

This paper analyzed the ICI reduction problem in HSR scenarios for MIMO-OFDM downlinks with distributed antennas and prove that the total ICI matrix can be formulated as the weighting average of single ICI matrices at related SWGN/Rician downlinks, where the weighting coefficients are the corresponding channel gain factors. By analyzing the features of ICI matrix, the unitary property when $\omega_D \rightarrow 0$ is derived, based on which, an ICI reduction method without computational costly matrix inversion is proposed. The proposed ICI reduction method can retain the spectrum efficiency and achieve low complexity frequency equalization for real-time ICI reduction on the fast time varying HSR channels. In addition, the ICI reduction performance are analyzed via the derived SIR expectation and variance with respect to Rician $K$-factor. Numerical results show that our proposed ICI mitigation method can get almost the same performance with that obtained on the ICI completely removed mode in terms of the mobile service in the scenarios of HSR  with velocity of 300km/h.

It can be noted that with the location information, the single ICI matrices of LOS paths can be acquired and hence, the estimation of weighting coefficients and fadings corresponding to NLOS paths are the key of ICI mitigation in HSR scenarios, which may be taken into account for future works.


\section*{Acknowledgment}

This work was partly supported by the China Major State Basic Research Development Program (973 Program) No.2012CB316100(2), National Natural Science Foundation of China (NSFC) No.61171064 and NSFC No.61321061.

\begin{appendices}

\section{}\label{Appen:DopplerVariance}
Because the AOAs and amplitudes of each NLOS paths is i.i.d. and $N_s$ is sufficiently large, according to the central limit theorem, $D(n-k)$ is complex Gaussian distributed and the variance is related to the power of the first and second term of \eqref{equ:DopplerSpreadSumExpress}. Denote the second term as $I(\theta)$, one can derive that

\begin{equation*}
\begin{split}
 E(I(\theta)^2) &= \int_{-\pi}^{\pi} I(\theta)^2 p(\theta) d\theta = \frac{1}{2\pi} \int_{-\pi}^{\pi} \left( \frac{sin(\pi (n-k+\omega_D cos(\theta_i)))}{N sin(\frac{\pi}{N}(n-k + \omega_D cos(\theta_i) ))} \right)^2 d\theta \\
&= \bigg\{
\begin{array}{cc}
\frac{1}{2\pi} \int_{-\pi}^{\pi} \left( \frac{sin(\pi \omega_D cos(\theta_i))}{ \pi \omega_D cos(\theta_i) }  \right)^2 d\theta  ,    & {n=k}\\
\frac{1}{2\pi} \int_{-\pi}^{\pi} \left( \frac{sin(\pi \omega_D cos(\theta_i))}{ N sin(\frac{\pi}{N}(n-k + \omega_D cos(\theta_i) ))}  \right)^2 d\theta  ,  & {n \neq k}.\\
\end{array}
\end{split}
\end{equation*}

Ignore the small $\omega_D$ when $(n-k) \neq 0$ and approximate $sin(\pi \omega_D cos(\theta_i))$ with $\pi \omega_D cos(\theta_i)$, the expectation of $I(\theta)^2$ is

\begin{equation*}
E(I(\theta)^2)= \bigg\{
\begin{array}{cc}
1 ,    & {n=k}\\
\frac{\omega_D^2}{2(n-k)^2}  ,  & {n \neq k}.\\
\end{array} .
\end{equation*}

Obviously, the mean of $D(n-k)$ is

\begin{equation*}
E(D(n-k)) = \sum_{i=1}^{N_s} E( a_i ) E(I(\theta)) = 0.
\end{equation*}

Hence, the variance of Doppler spread in Lemma \ref{lem:DopplerSpreadVariance} is

\begin{equation*}
\begin{split}
E(D(n-k)^2)= N_s E(a_i^2) E(I(\theta)^2) = \bigg\{
\begin{array}{cc}
\frac{1}{K+1} ,    & {n=k}\\
\frac{\omega_D^2}{2(K+1)(n-k)^2}  ,  & {n \neq k}.
\end{array} 
\end{split}
\end{equation*}

\section{}\label{Appen:LOSUnitary}
The proving of the unitary property of the transmission matrix $\mathbf{S}_r^{\rm{L}}$ can be split into two parts, i.e. the diagonal submatrices of $\mathbf{F}_r$ approach $(\sum_{t=1}^{L} \rho_{t,r} \mathbf{G})^T (\sum_{t=1}^{L} \rho_{t,r} \mathbf{G})$, denoted as $\bm{\beta}^T \bm{\beta}$, and the remaining interference caused by non-diagonal submatrices tends to be vanish as $\omega_D \rightarrow 0$.

From \eqref{equ:I_rk[n-k]}, we can see that $I_{r,t}[N+n-k] = I_{r,t}[n-k]$, $\lim\limits_{(n-k) \to \infty } {I_{r,t}[n-k]} = 0$ and

\begin{equation}\label{equ:SimI_rk[n-k]}
\begin{split}
\lim\limits_{N \to \infty } {I_{r,t}[n-k]} &= \frac{\rm{sin}\left( \pi \varepsilon_{r,t} \right) \rm{exp}\left(j \pi \varepsilon_{r,t} \right)}{ \pi(n+\varepsilon_{r,t}-k) } (-1)^{n-k}.
\end{split}
\end{equation}

Hence, the component $I_{r,t}[n-k]$ can be vanished for sufficiently large $(n-k)$. Substituting \eqref{equ:SimI_rk[n-k]} into \eqref{equ:I_mat} and vanishing the imaginary parts for tiny $\varepsilon_{r,t}$, we can derive the expression of $\mathbf{\Lambda}_{r}^{\rm{L}}(k,k+m)$ as

\begin{equation*} \label{equ:ExpressionOfF_r}
\begin{split}
&\mathbf{\Lambda}_r^{\rm{L}}(k,k+m) = \sum_{u=-(N-1)}^{N-1} \sum_{i=1}^{N_t} \sum_{j=1}^{N_t} \Big\{\\
& \frac{\rho_{r,i}\rho_{r,j} \rm{sin}(\pi (u - \varepsilon_{r,i})) \rm{sin}(\pi (u + m - \varepsilon_{r,j}))}{\pi^2 (u - \varepsilon_{r,i})(u + m - \varepsilon_{r,j})} \mathbf{G}^{T} \mathbf{G} \Big\}.
\end{split}
\end{equation*}

The diagonal submatrices of $\mathbf{\Lambda}_r^{\rm{L}}$, $m=0$, is

\begin{equation*} \label{equ:DiagonalApprox}
\begin{split}
\mathbf{\Lambda}_r^{\rm{L}}(k,k) = \sum_{i=1}^{N_t} \sum_{j=1}^{N_t} \frac{\rho_{r,i}\rho_{r,j}\rm{sin}(\pi \varepsilon_{r,i}) \rm{sin}(\pi \varepsilon_{r,j})}{\pi^2 \varepsilon_{r,i} \varepsilon_{r,j}} \mathbf{G}^T \mathbf{G} + 2 \sum_{u=1}^{N-1} \frac{1}{u^2} \sum_{i=1}^{N_t} \sum_{j=1}^{N_t} \frac{\rho_{r,i}\rho_{r,j}\rm{sin}(\pi \varepsilon_{r,i}) \rm{sin}(\pi \varepsilon_{r,j})}{\pi^2} \mathbf{G}^T \mathbf{G}.
\end{split}
\end{equation*}

Using the assumption that $N$ is sufficiently large and the conclusion of Riemann Zeta function, i.e. $\sum_{u=1}^{\infty} \frac{1}{u^k} = \zeta(k)$\cite{titchmarsh1986theory}, we can derive that $\lim\limits_{N \to \infty }\sum_{u=1}^{N-1} \frac{1}{u^2} = \zeta(2) = \frac{\pi^2}{6}$. Hence, \eqref{equ:ResDiagonalOfL} can be easily derived. When $\omega_D \rightarrow 0$, the first component of \eqref{equ:ResDiagonalOfL} can be approximated as $\sum_{i=1}^{N_t} \sum_{j=1}^{N_t} \rho_{r,i} \rho_{r,j} \mathbf{G}^T \mathbf{G} = \bm{\beta}^T \bm{\beta}$ and the second component can be vanished.

The non-diagonal submatrices of $\mathbf{\Lambda}_r^{\rm{L}}$, $m \neq 0$, is

\begin{equation} \label{equ:NonDiagonalApprox}
\begin{split}
\mathbf{\Lambda}_r^{\rm{L}}(k,k+m) = &\bigg\{ \sum_{i=1}^{N_t} \sum_{j=1}^{N_t} \frac{\rho_{r,i} \rho_{r,j} \rm{sin}(\pi \varepsilon_{r,i}) \rm{sin}(\pi \varepsilon_{r,j})}{\pi^2 \varepsilon_{r,i}(\varepsilon_{r,j} - m)} \mathbf{G}^T \mathbf{G} + \sum_{i=1}^{N_t} \sum_{j=1}^{N_t} \frac{\rho_{r,i} \rho_{r,j} \rm{sin}(\pi \varepsilon_{r,i}) sin(\pi \varepsilon_{r,j})}{\pi^2 (\varepsilon_{r,i} + m)\varepsilon_{r,j}} \mathbf{G}^T \mathbf{G}\\
&+ \sum_{u=-(N-1)}^{-m-1} \sum_{i=1}^{N_t} \sum_{j=1}^{N_t} \frac{\rho_{r,i} \rho_{r,j} sin(\pi \varepsilon_{r,i}) sin(\pi \varepsilon_{r,j})}{\pi^2(u-\varepsilon_{r,i})(u + m - \varepsilon_{r,j})} \mathbf{G}^T \mathbf{G}\\
&+ \sum_{u=-(m-1)}^{-1} \sum_{i=1}^{N_t} \sum_{j=1}^{N_t} \frac{\rho_{r,i} \rho_{r,j} sin(\pi \varepsilon_{r,i}) sin(\pi \varepsilon_{r,j})}{\pi^2(u-\varepsilon_{r,i})(u + m - \varepsilon_{r,j})} \mathbf{G}^T \mathbf{G}\\
&+ \sum_{u=1}^{N-1} \sum_{i=1}^{N_t} \sum_{j=1}^{N_t} \frac{\rho_{r,i} \rho_{r,j} sin(\pi \varepsilon_{r,i}) sin(\pi \varepsilon_{r,j})}{\pi^2(u-\varepsilon_{r,i})(u + m - \varepsilon_{r,j})} \mathbf{G}^T \mathbf{G} \bigg\}  (-1)^{m}.
\end{split}
\end{equation}

By approximating $u \pm \varepsilon_{r,i}$ with $u$ and adopting the assumption of $N \rightarrow \infty$, \eqref{equ:NonDiagonalApprox} can be simplified as \eqref{equ:ResNonDiagonalOfL}. The orders of the first and second component in \eqref{equ:ResNonDiagonalOfL} are $O(\omega_D)$ and the third component's order is $O(\omega_D^2)$, where $O(\cdot)$ is the order operator. As $\omega_D \rightarrow 0$, $O(\omega_D) + O(\omega_D^2) \rightarrow 0$. Hence, the ICI matrix $\mathbf{D}_r$ is approximately unitary when $\omega_D \rightarrow 0$.

As the unitary property declines with increasing velocity, the ICI reduction performance decreases and the remaining SIR can be expressed in \eqref{equ:SIRExpression}. Because $\mathbf{G}$ is the matrix full with elements 1, the ${\rm{SIR}}_k$ is determined by the coefficients in \eqref{equ:ResDiagonalOfL} and \eqref{equ:ResNonDiagonalOfL}. Thus, by dividing \eqref{equ:ResNonDiagonalOfL} by \eqref{equ:ResDiagonalOfL} and denoting the result as ${\rm{ISR}}_{k,m}$, the ISR caused by $(k+m)$-th subcarrier is

\begin{equation*}
\begin{split}
{\rm{ISR}}_{k,m} \thickapprox (\sum_{i=1}^{N_t}\sum_{j=1}^{N_t}\frac{2 \varepsilon_{r,i}\varepsilon_{r,j} + m \varepsilon_{r,j} - m \varepsilon_{r,i}}{\varepsilon_{r,i}\varepsilon_{r,j} + m \varepsilon_{r,j} - m \varepsilon_{r,i} - m^2})^2 \leq \frac{4 \omega_D^4 \cos^4(\theta_B)}{m^4},
\end{split}
\end{equation*}

Hence, the upper-bound of summed ${\rm{ISR}}_k$ is

\begin{equation*}
{\rm{ISR}}_k \leq 2\sum_{m=1}^{\infty} \frac{4 \omega_D^4 \cos^4(\theta_B)}{m^4} = 8 \omega_D^4 \cos^4(\theta_B) \zeta(4),
\end{equation*}
where $\sum_{m=1}^{\infty}\frac{1}{m^4} = \zeta(4) = \frac{\pi^4}{90}$ and the minimum SIR can be expressed as \eqref{equ:SIRApprox}. The minimum SIR corresponds to B in Fig. \ref{Fig:CoverageModel} and $\theta_B$ is the AOA of signals from the $(j+3)$-th RRU.

By contrast, A and C corresponds to the maximum SIR. Without loss of generality, A is considered. In this situation, $\theta_{j+1} = \pi-\theta_{j+3}$, $\theta_{j+2}=0$, $\rho_{j+1} = \rho_{j+3}$ and $\rho_{j+2} \gg \rho_{j+1}$ holds. For the concise of paper, denote $\theta_{j+3}$ as $\theta_{A}$. Due to the large-scale path loss, only signals from the $(j+1)$-th to the$(j+3)$-th RRUs are dominant and $L=3$. Substitute above equations into \eqref{equ:ResDiagonalOfL} and \eqref{equ:ResNonDiagonalOfL} and ignore the tiny parts, one can generate the lower-bound of remaining ISR at the $(k+m)$-th subcarrier as

\begin{equation*}
\begin{split}
&{\rm{ISR}}_{k,m} \geq \bigg \{ \frac{\frac{6\rho_{j+1}\rho_{j+2}\omega_D \cos(\theta_A) \rm{sin}(\pi \omega_D \cos(\theta_A)) }{ \pi m^2 }+\frac{8\rho_{j+1}^2\rm{sin}^2(\pi \omega_D \cos(\theta_A))}{\pi^2 m^2}}{\rho_{j+2}^2+4\frac{\rho_{j+1}^2 \rm{sin}^2(\pi\omega_D \cos(\theta_A))}{\pi^2 \omega_D^{2} \cos(\theta_A)}+4\frac{\rho_{j+1}^2 \rm{sin}^2(\pi\omega_D \cos(\theta_A))}{3}} \bigg\}^2 \thickapprox \left( \frac{6 \rho_{j+1} \omega_D^2 \cos^2(\theta_A) }{ \rho_{j+2} m^2} \right)^2.
\end{split}
\end{equation*}
The corresponding maximum ${\rm{SIR}}_k$ can be easily derived as \eqref{equ:SIRApprox} by substituting $\sum_{m=1}^{\infty}\frac{1}{m^4} = \zeta(4)$ into ${\rm{ISR}}_k = 2\sum_{m=1}^{\infty} {\rm{ISR}}_{k,m}$.

\section{}\label{Appen:RicianUnitary}
The proving of the unitary property of the transmission matrix $\mathbf{S}_r$ can also be split into two parts, i.e. the diagonal submatrices of $\mathbf{\Lambda}_r$ approach $(\sum_{t=1}^{L} \rho_{t,r} \frac{K}{K+1} \mathbf{G} + \frac{K}{K+1} \mathbf{R}_t)^T (\sum_{t=1}^{L} \rho_{t,r} \frac{K}{K+1} \mathbf{G} + \frac{K}{K+1} \mathbf{R}_t)$, denoted as $\bm{\zeta}^T \bm{\zeta}$, and the remaining interference caused by non-diagonal submatrices tends to be vanish as $\omega_D \rightarrow 0$.

Observing \eqref{equ:I_mat} and \eqref{equ:I_Dmat}, one can derive the expression of $\mathbf{\Lambda}_r^{\rm{L}}$ and $\Lambda_r^{LN}$ by simply replacing $\mathbf{G}$, $\frac{sin(\pi \varepsilon_{r,j})}{\pi (n + \varepsilon_{r,j} - k)}$ and $\frac{sin(\pi \varepsilon_{r,j})}{\pi \varepsilon_{r,j}}$ with $\mathbf{R}_i$, $\frac{\omega_D}{\sqrt{2}(n-k)}$ and 1, respectively. Thus, adopting the assumption of $N \rightarrow \infty$, the corresponding diagonal and non-diagonal submatrices can be easily expressed as \eqref{equ:ResDiagonalOfN}, \eqref{equ:ResNonDiagonalOfN}, \eqref{equ:ResDiagonalOfLN} and \eqref{equ:ResNonDiagonalOfLN}.

Vanish the components with orders $O(\omega_D)$ and $O(\omega_D^2)$ as $\omega_D \rightarrow 0$. The conclusion in Lemma \ref{lem:OriginalOrthogonalRician} can be easily proved.

Referencing the remaining SIR in \eqref{equ:SIRRicianExpression}, the ICI reduction performance of proposed method decreases with increasing velocity. Observe the structure of \eqref{equ:ResDiagonalOfL}, \eqref{equ:ResNonDiagonalOfL} and \eqref{equ:ResDiagonalOfN}-\eqref{equ:ResNonDiagonalOfLN}, one can claim that the SIR performance of Alg. \ref{alg:originalAlgRician} is similar with Alg. \ref{alg:ICIReductionAlgLOS}. Thus, the maximum and minimal $\rm{SIR}_k$ is achieved at A and B in Fig. \ref{Fig:CoverageModel}, respectively. 

Observing the expression of $\mathbf{\Lambda}_r^{\rm{LN}}(k,k+m)$ and considering the fact that $\mathbf{\Lambda}_r^{\rm{NL}} = (\mathbf{\Lambda}_r^{\rm{LN}})^T$, one can easily derive that 

\begin{equation}\label{equ:SumOfNLAndLN}
\begin{split}
\mathbf{\Lambda}_r^{\rm{NL}}(k,k+m) + \mathbf{\Lambda}_r^{\rm{LN}}(k,k+m) = (-1)^m \sum_{i=1}^{N_t} \sum_{j=1}^{N_t} \big\{2 \frac{\rho_i \rho_j sin(\pi\varepsilon_{r,i})\varepsilon_{r,i}}{\pi (\varepsilon_{r,i}^2 - m^2)} + 2\sqrt{2} \frac{\rho_{r,i} \rho_{r,j} \varepsilon_{r,i} \omega_D}{m^2} \big\} \mathbf{G}^T \mathbf{R}_j.
\end{split}
\end{equation}

When train is at A, the assumptions in Appendix \ref{Appen:LOSUnitary} holds as well. Substitute \eqref{equ:SumOfNLAndLN} into \eqref{equ:SIRRicianExpression} and vanishing the tiny parts, the lower-bound of remaining ${\rm{ISR}}_{k,m}$ and ${\rm{ISR}}_k$ can be expressed as \eqref{equ:ISRKMRicianMin} and \eqref{equ:ISRRicianMin}, which is shown at the top of next page. $\theta_A$ has the same definition as Appendix \ref{Appen:LOSUnitary}. Similarly, when train is at B, $\theta_{j+2} = \pi - \theta_{j+3}$, $\rho_{j+2} = \rho_{j+3}$ and $L = 2$. Denote $\theta_{j+3}$ as $\theta_B$, the corresponding upper-bounds of ${\rm{ISR}}_{k,m}$ and ${\rm{ISR}}_{k}$ can be written as \eqref{equ:ISRKMRicianMax} and \eqref{equ:ISRRicianMax}, respectively. Thus, the maximum and minimal ${\rm{SIR}}_{k}$ can be expressed as \eqref{equ:SIRApproxRicianMax} and \eqref{equ:SIRApproxRicianMin}.

To analysis the mean of ${\rm{SIR}}_{k}$, ignore the parts of $\mathbf{R}_{j+2}^T \mathbf{R}_{j+2}$ and $(\mathbf{R}_{j+1} + \mathbf{R}_{j+2})^T (\mathbf{R}_{j+1} + \mathbf{R}_{j+2})$ in \eqref{equ:SIRApproxRicianMax} and \eqref{equ:SIRApproxRicianMin} for sufficiently large $K$. In addition, because $\mathbf{R}_{j+1}$ and $\mathbf{R}_{j+2}$ are Gaussian distributed, the mean of signal and interference parts in \eqref{equ:SIRApproxRicianMax} and \eqref{equ:SIRApproxRicianMin} are the same as that in \eqref{equ:SIRExpression}. Thus, the average ${\rm{SIR}}_{k}$ can be also expressed as \eqref{equ:SIRApprox}.

\begin{figure*}[!t]
\normalsize

\begin{equation}\label{equ:ISRKMRicianMin}
\begin{split}
{\rm{ISR}}_{k,m} \geq \frac{||6K\frac{ \rho_{j+1}\rho_{j+2} \omega_D^2 \cos^2(\theta_A)}{ m^2 } \mathbf{G}^T \mathbf{G} - 4\sqrt{K}\frac{\rho_{j+1}\rho_{j+2}\omega_D^2 \cos^2(\theta_A)}{m^2} \mathbf{G}^T \mathbf{R}_{j+2} + 2 \frac{\rho_{j+2}^2 \omega_D^2}{m^2} \mathbf{R}_{j+2}^T \mathbf{R}_{j+2} ||_2^2}{||K \rho_{j+2}^2 \mathbf{G}^T \mathbf{G} + 2\sqrt{K} \rho_{j+2}^2 \mathbf{G}^T \mathbf{R}_{j+2} + \rho_{j+2}^2 \mathbf{R}_{j+2}^T \mathbf{R}_{j+2}||_2^2}
\end{split}
\end{equation}

\begin{equation}\label{equ:ISRRicianMin}
\begin{split}
&{\rm{ISR}}_k \geq \frac{ 2\zeta(4)||6K \rho_{j+1}\rho_{j+2}  {\omega_D}^2 \cos^2(\theta_A) \mathbf{G}^T \mathbf{G} - 4\sqrt{K} \rho_{j+1}\rho_{j+2}\omega_D^2 \cos^2(\theta_A) \mathbf{G}^T \mathbf{R}_{j+2} + 2 \rho_{j+2}^2 \omega_D^2 \mathbf{R}_{j+2}^T \mathbf{R}_{j+2} ||_2^2}{||K \rho_{j+2}^2 \mathbf{G}^T \mathbf{G} + 2\sqrt{K} \rho_{j+2}^2 \mathbf{G}^T \mathbf{R}_{j+2} + \rho_{j+2}^2 \mathbf{R}_{j+2}^T \mathbf{R}_{j+2}||_2^2}
\end{split}
\end{equation}

\begin{equation}\label{equ:ISRKMRicianMax}
\begin{split}
{\rm{ISR}}_{k,m} \leq \frac{||8K \frac{\rho_{j+2}^2 \omega_D^2 cos^2(\theta_B)}{m^2} \mathbf{G}^T \mathbf{G} + \frac{\rho_{j+2}^2 \omega_D^2}{m^2}(\mathbf{R}_{j+1} + \mathbf{R}_{j+2})^T (\mathbf{R}_{j+1} + \mathbf{R}_{j+2})||_2^2}{||4K\rho_{j+2}^2 \mathbf{G}^T \mathbf{G} + 2 \sqrt{K} \rho_{j+2}^2 \mathbf{G}^T(\mathbf{R}_{j+1} + \mathbf{R}_{j+2}) + \rho_{j+2}^2 (\mathbf{R}_{j+1} + \mathbf{R}_{j+2})^T (\mathbf{R}_{j+1} + \mathbf{R}_{j+2}) ||_2^2}
\end{split}
\end{equation}

\begin{equation}\label{equ:ISRRicianMax}
\begin{split}
{\rm{ISR}}_k \leq \frac{2\zeta(4) ||8K \rho_{j+2}^2 \omega_D^2 cos^2(\theta_B) \mathbf{G}^T \mathbf{G} + \rho_{j+2}^2 \omega_D^2 (\mathbf{R}_{j+1} + \mathbf{R}_{j+2})^T (\mathbf{R}_{j+1} + \mathbf{R}_{j+2})||_2^2}{||4K\rho_{j+2}^2 \mathbf{G}^T \mathbf{G} + 2 \sqrt{K} \rho_{j+2}^2 \mathbf{G}^T(\mathbf{R}_{j+1} + \mathbf{R}_{j+2}) + \rho_{j+2}^2 (\mathbf{R}_{j+1} + \mathbf{R}_{j+2})^T (\mathbf{R}_{j+1} + \mathbf{R}_{j+2}) ||_2^2}
\end{split}
\end{equation}
\hrulefill
\vspace*{4pt}
\end{figure*}

\section{}\label{Appen:Variance}
In the ICI reduction method aiming at LOS scenarios, the SIR at the $k$-th subcarrier can be immediately expressed as
\begin{equation}\label{equ:SIRLOSInRician}
{\rm{SIR}}_k = \frac{|| \mathbf{\Lambda}_r(k,k) ||_2^2}{\sum_{i=1,\neq k}^{N} || \mathbf{\Lambda}_r(k,i) ||_{2}^{2}},
\end{equation}
where 
\begin{equation*}
\mathbf{\Lambda}_r(k,i) = \sqrt{\frac{K}{K+1}} \mathbf{\Lambda}_r^{\rm{L}}(k,i) + \sqrt{\frac{1}{K+1}}\mathbf{\Lambda}_r^{\rm{LN}}(k,i).
\end{equation*}
Substitute \eqref{equ:DiagonalApprox}, \eqref{equ:NonDiagonalApprox}, \eqref{equ:ResDiagonalOfLN} and \eqref{equ:ResNonDiagonalOfLN} into \eqref{equ:SIRLOSInRician}, the approximate lower-bound of remaining ISR at the $(k+m)$-th subcarrier can be expressed by

\begin{equation*}
\begin{split}
&{\rm{ISR}}_{k,m} \geq \frac{||6\sqrt{K}\frac{ \rho_{j+1}\rho_{j+2} \omega_D^2 \cos^2(\theta_A)}{ m^2 } \mathbf{G}^T \mathbf{G} + \frac{\rho_{j+2}^2\omega_D}{\sqrt{2}m} \mathbf{G}^T \mathbf{R}_{j+2}||_2^2}{||\sqrt{K} \rho_{j+2}^2 \mathbf{G}^T \mathbf{G} + \rho_{j+2}^2 \mathbf{G}^T \mathbf{R}_{j+2} ||_2^2}.
\end{split}
\end{equation*}
In LOS scenarios, the Rician K-factor is sufficiently large. Thus, the random part in denominator can be vanished and ${\rm{ISR}}_{k,m}$ can be simplified as

\begin{equation*}
\begin{split}
&{\rm{ISR}}_{k,m} \geq \frac{36K\frac{\omega_D^4 \cos^4(\theta_A)}{m^4} + \frac{\psi \omega_D^2}{2m^2} r_{j+2}^2 }{K \psi},
\end{split}
\end{equation*}
where $r_{j+2}$ is the representative element of $\mathbf{R}_{j+2}$. By substituting $\sum_{m=1}^{\infty}\frac{1}{m^4} = \zeta(4)$ and $\sum_{m=1}^{\infty}\frac{1}{m^2} = \zeta(2)$ into ${\rm{ISR}}_k = 2\sum_{m=1}^{\infty} {\rm{ISR}}_{k,m}$, the corresponding minimal ${\rm{ISR}}_k$ can be easily derived as

\begin{equation}\label{equ:MeanAppenISRSimMax}
\begin{split}
&{\rm{ISR}}_{k} \geq \frac{72K \omega_D^4 \cos^4(\theta_A)\zeta(4) + \psi \omega_D^2 \zeta(2) r_{j+2}^2}{K \psi}.
\end{split}
\end{equation}

Furthermore, let us evaluate the variance of maximum ${\rm{SIR}}_k$ in log scale. Consider the fact that Rician K-factor is sufficiently large, the ${\rm{SIR}}_k$ can be linearly approximated by

\begin{equation}\label{equ:VarAppenISRSimMax}
\begin{split}
{\rm{SIR}}_k \approx &10\log_{10}\bigg( \frac{72\omega_D^4 \cos^4(\theta_A) \zeta(4)}{\psi} \bigg) + 2\sqrt{2} \log_{10}\bigg(1 + \frac{\psi \zeta(2)}{29\omega_D^2 \cos^4(\theta_A) \zeta(4) K}\bigg)r_{j+2}^2.
\end{split}
\end{equation}
Because $r_{j+2}$ is Gaussian distributed, $\mathbb{E}(r_{j+2}^2)=1$ and $\mathrm{Var}(r_{j+2}^2)=1$. Thus, the expectation and variance of maximum ${\rm{SIR}}_k$ can be easily derived from \eqref{equ:MeanAppenISRSimMax} and \eqref{equ:VarAppenISRSimMax}, respectively.

In the ICI reduction method aiming at Rician scenarios, the maximum and minimum ${\rm{SIR}}_k$ can be expressed as \eqref{equ:SIRApproxRicianMax} and \eqref{equ:SIRApproxRicianMin}. Vanish the random parts in denominator and ignore the parts of $\mathbf{R}_{j+2}^T \mathbf{R}_{j+2}$ and $(\mathbf{R}_{j+1}+\mathbf{R}_{j+2})^T (\mathbf{R}_{j+1}+\mathbf{R}_{j+2})$, the maximum and minimum ${\rm{SIR}}_k$ can be linearly approximated by

\begin{equation}\label{equ:VarAppenISROrigMax}
\begin{split}
{\rm{max}} \left(\rm{SIR}_k\right) \approx 10\log_{10}&\left( \frac{\psi^2 }{72\zeta(4)\omega_D^4\cos^4(\theta_A)} \right) + \frac{20}{\sqrt{K}\ln(10)} r_{j+2} ~~~{\rm{dB}}^2
\end{split}
\end{equation}
and
\begin{equation}\label{equ:VarAppenISROrigMin}
\begin{split}
{\rm{min}} \left(\rm{SIR}_k\right) \approx 10\log_{10}&\left( \frac{1 }{4\zeta(4)\omega_D^4\cos^4(\theta_B)} \right) + \frac{10}{\sqrt{K}\ln(10)} \Big(r_{j+1} + r_{j+2}\Big) ~{\rm{dB}}^2.
\end{split}
\end{equation}
Because $r_{j+1}$ and $r_{j+2}$ are Gaussian distributed and independent, $\mathrm{Var}(r_{j+1} + r_{j+2}) = 2$. Thus, the variance of maximum and minimal ${\rm{SIR}}_k$ can be easily derived from \eqref{equ:VarAppenISROrigMax} and \eqref{equ:VarAppenISROrigMin}, respectively.

\end{appendices}

\bibliography{bibfile}



\end{document}